\theoremstyle{plain}
\newtheorem{theorem}{Theorem}[section]
\newtheorem{lemma}[theorem]{Lemma}
\theoremstyle{definition}
\newtheorem{definition}[theorem]{Definition}
\theoremstyle{remark}
\icmltitlerunning{Optimal Auction Design in the Joint Advertising}
\begin{document}

\twocolumn[
\icmltitle{Optimal Auction Design in the Joint Advertising}



\icmlsetsymbol{equal}{*}

\begin{icmlauthorlist}
\icmlauthor{Yang Li}{ruc}
\icmlauthor{Yuchao Ma}{ruc}
\icmlauthor{Qi Qi \textsuperscript{†}}{ruc,bj,mo} 
\end{icmlauthorlist}

\icmlaffiliation{ruc}{Gaoling School of Artificial Intelligence, Renmin University of China, Beijing, China}
\icmlaffiliation{bj}{Beijing Key Laboratory of Research on Large Models and Intelligent Governance
}
\icmlaffiliation{mo}{Engineering Research Center of Next-Generation Intelligent Search and Recommendation, MOE}

\icmlcorrespondingauthor{Qi Qi}{qi.qi@ruc.edu.cn}


\icmlkeywords{Joint Advertisement, Auction Design, BundleNet}

\vskip 0.3in
]



\printAffiliationsAndNotice{}  

\begin{abstract}
Online advertising is a vital revenue source for major internet platforms. Recently, joint advertising, which assigns a bundle of two advertisers in an ad slot instead of allocating a single advertiser, has emerged as an effective method for enhancing allocation efficiency and revenue. However, existing mechanisms for joint advertising fail to realize the optimality, as they tend to focus on individual advertisers and overlook bundle structures. This paper identifies an optimal mechanism for joint advertising in a single-slot setting. For multi-slot joint advertising, we propose \textbf{BundleNet}, a novel bundle-based neural network approach specifically designed for joint advertising. Our extensive experiments demonstrate that the mechanisms generated by \textbf{BundleNet} approximate the theoretical analysis results in the single-slot setting and achieve state-of-the-art performance in the multi-slot setting. This significantly increases platform revenue while ensuring approximate dominant strategy incentive compatibility and individual rationality. 

\end{abstract}

\section{Introduction}

\begin{figure*}[t]
  \centering
  \includegraphics[width=0.9\textwidth]{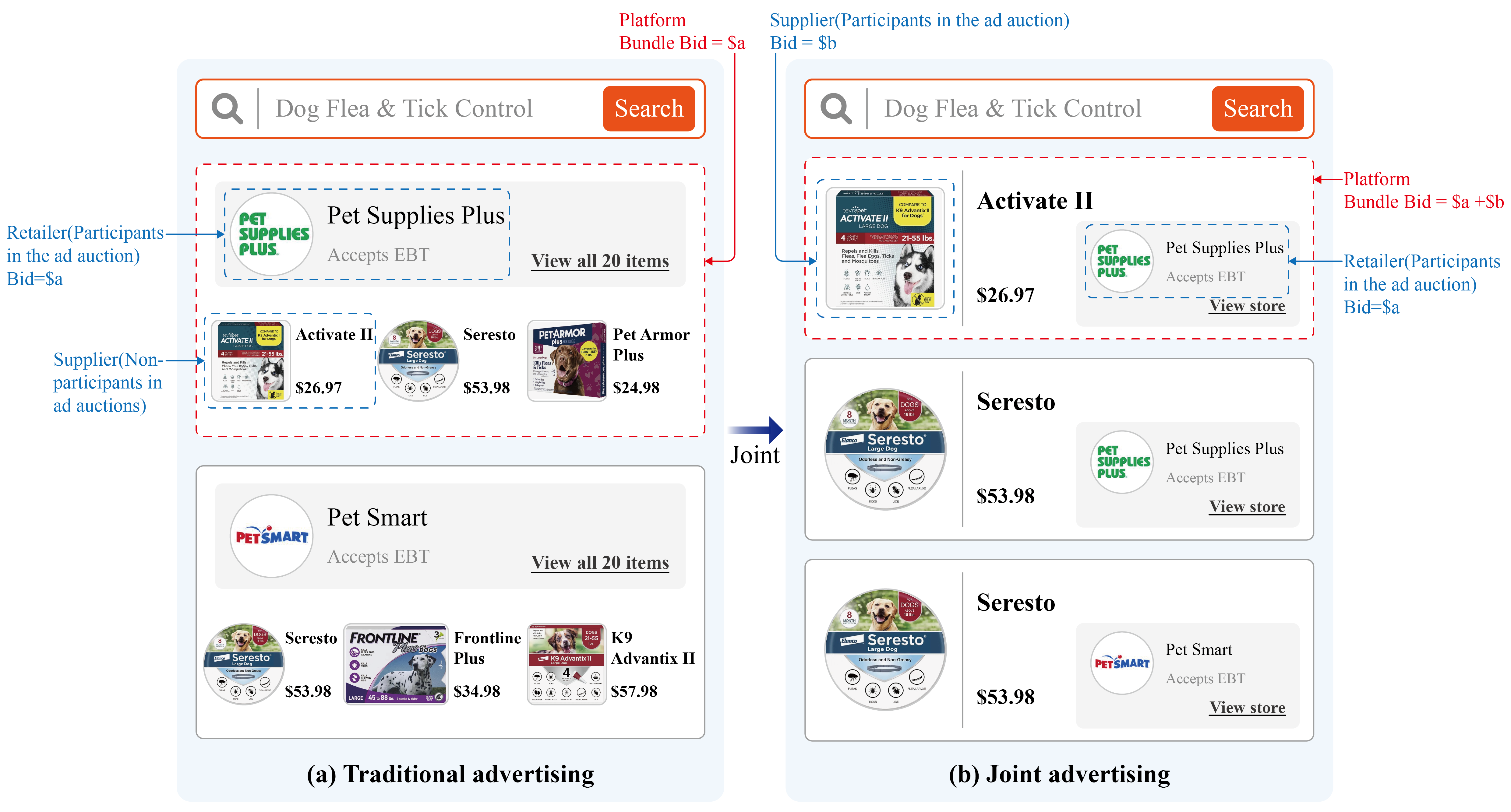}
  \caption{A Comparison Between the Traditional Advertising Model and the Joint Advertising Model.The image on the left illustrates traditional advertising, where the retailer submits a bid for the ad, while the supplier (brand owner) does not participate in the auction. In this case, the platform only receives the bid from the retailer. In contrast, as shown in the right image, for a joint advertisement, both the retailer and the supplier participate in the auction. They submit bids simultaneously, and the platform receives bids from both parties.}
  \label{setting}
\end{figure*}

Online advertising emerges as the principal revenue source for internet platforms like Google, Amazon, and Facebook. With online advertising revenue reaching approximately \$225 billion in 2023, it plays a critical role in sustaining the growth and development of these platforms. A prevalent method for allocating advertisement slots is sponsored search auctions, wherein advertisers submit bids to the platform. The platform subsequently employs predefined auction mechanisms to determine ad placements and pricing. Maximizing the monetization efficiency of advertising traffic thus becomes a fundamental research direction for these companies.

In recent years, a novel auction scenario known as “joint advertisement" emerge on online advertising platforms such as Facebook \cite{facebook2024}. This joint advertisement scenario \cite{Ma2024JAMA}, where both stores and brands contribute to a bundle for advertising, allows for a more integrated approach to advertisement placement. As illustrated in Figure \ref{setting}, in traditional advertising pages, each product is bid on solely by the retailer as a single bidder. However, in joint advertisement, the retailer and the supplier jointly bid on an ad. The ranking of the advertisement is thus influenced by the combined bid rather than by a single party's bid, and the platform can charge both participants accordingly. By accommodating the interests of multiple stakeholders—including platforms, retailers, and brand suppliers—joint advertisements create a mutually beneficial ecosystem, fostering a win-win situation for all parties involved.

Existing approaches for improving revenue in joint advertisements can be categorized into two types: the first is the use of Vickrey-Clark-Groves (VCG) mechanism \cite{Vickrey1961VCG, clarke1971multipart, groves1973incentives, Ma2024JAMA} and VCG-like mechanisms such as JAMA \cite{Ma2024JAMA}, and the second is the application of automated mechanism design methods to joint advertisements, known as JRegNet \cite{zhang2024jregnet}. Compared to VCG, JAMA struggles to adapt to the flexible and complex bipartite relationships between retailers and suppliers that often arise in joint advertisement scenarios. JRegNet, on the other hand, suffers from poor model generalization and robustness, and in many cases, its performance is inferior to VCG, leading to reduced rather than increased revenue.

Nevertheless, these methods have the following issues:
\begin{itemize}[leftmargin=*]
\item These methods are similar to traditional methods in that they only establish incentive compatibility (IC) in the brand (or store) dimension. However, in the scenario of joint advertising, the final ad slot will be allocated to a bundle consisting of a brand and a store. Previous work did not fully characterize the bundle, resulting in space for optimization in the actual ad slot allocation process.

\item In addition, since these methods cannot ensure the perfect fulfillment of IC when maximizing the platform's revenue, the regret value is used to measure the degree of IC violation. Although the lower the regret value, the closer the mechanism is to the optimal, it may not be close to the optimal in the parameter space, so there is the possibility of a locally optimal solution.

\end{itemize}

To address these challenges, we propose a novel and efficient automated mechanism learning approach, termed \textbf{BundleNet}, specifically tailored for joint advertisements.

Concretely, the contributions of our paper can be summarized as follows:

First, we identify the optimal mechanism for single-slot joint advertisement. Additionally, we propose a novel neural network architecture and introduce a new IC constraint method for multi-slot joint advertisements. Our approach not only improves platform revenue but also ensures approximate IC and individual rationality (IR). Extensive experiments demonstrate that our method achieves state-of-the-art performance.

\section{Related Work}

In traditional advertising auctions, the generalized second price (GSP) auction, following early work \cite{aggarwal2006truthful, Varian2007}, has been popular in the past two decades for its simplicity, feasibility, and good revenue \cite{Edelman2007GSP}. Variations like adding reserved prices \cite{Thompson2013Revenueoptimization, Roberts2016RAT} and “squashing” \cite{Lahaie2007Revenue, Charles2016MSPA} have been introduced to boost revenue. But GSP lacks the IC property.

In auction mechanism design, Myerson \cite{myerson1981optimal} characterized the revenue maximizing single-parameter auction mechanism. However, the multi-parameter optimal auction design problem remains unsolved even after 40 years. Automated mechanism design \cite{Conitzer2002complexity, Conitzer2004AMD, Sandholm2015Automated} addresses the multi-item, multi-bidder auction design challenge by finding approximate optimal solutions. There are three main approaches in this field: the RegretNet-like approach \cite{dutting2024optimal, curry2020certifying, Curry2024PreferenceNet, Rahme_Jelassi_Bruna_Weinberg_2021, duan2022context, Ivanov2022RegretFormer} constructing IC constraints; the affine maximizer auctions (AMA) \cite{roberts1979characterization} and related methods \cite{likhodedov2004approximating, guo2017optimizing, Curry2023AMA, Duan2023AMenuNet} modifying allocation with weights for higher revenue; and the approach characterizing utility functions for IC and strategy-proofness \cite{dutting2024optimal, shen2019menunet, wang2024gemnet}. Automated mechanism design is increasingly used in online advertising auctions \cite{zhang2021optimizing, liu2021neural, liao2022neural}.

\cite{Ma2024JAMA} proposed the joint advertisement setting, a Revised VCG mechanism for revenue improvement, and JAMA (a menu-based AMA approach \cite{Curry2023AMA}) to further increase revenue. Our approach is closer to JRegNet \cite{zhang2024jregnet}, a RegretNet-like method for joint ads. JRegNet encodes bipartite relationships in neural networks and transforms allocation to payment functions, outperforming VCG in small-bidder bipartite settings. But both JAMA and JRegNet have limitations. JAMA can't adapt to complex bipartite relationships in joint ads, and JRegNet has poor generalization in large, complex scenarios. \cite{aggarwal2024selling} proposed learning algorithms for repeated joint ads to minimize regret and maximize revenue in different environments, not focusing on single-round ad revenue. 

\section{Preliminaries}

In this section, we mainly introduce the basic setting of joint advertising system.

In the context of joint advertisements, the sets of retailers \( R \) and suppliers \( S \) are mutually exclusive, with \( R \cap S = \emptyset \). When a user submits a query, the advertising system retrieves \( n \) advertisements represented by the set \( E = \{e_1, \dots, e_n\} \), where each advertisement links a retailer \( r \in R \) with a supplier \( s \in S \). Simultaneously, there are \( m \) slots \( M = \{1, \dots, m\} \) available for displaying advertisements. The click-through rate (CTR) for the \( k \)-th slot is denoted as \( \lambda_k \), where \( k \in M \), and we assume that the CTRs are ordered such that \( 0 \leq \lambda_m \leq \cdots \leq \lambda_k \leq \cdots \leq \lambda_1 \leq 1 \). We represent these CTRs using a vector, \( \boldsymbol{\lambda} = (\lambda_1, \dots, \lambda_m) \).The relationship between \( R \) and \( S \) is modeled as a bipartite graph \( G = (R, S, E) \), with edges \( E \subseteq R \times S \) representing advertisements.



The joint valuation distribution domain is defined as \( V = V^R \times V^S \), where \( V^R \) represents the domain of all possible retailer valuation distributions, with \( v^R = (v_r)_{r \in R} \), and \( V^S \) is the domain of all possible supplier valuation distributions, with \( v^S = (v_s)_{s \in S} \). Retailer valuations \( v_r \) and supplier valuations \( v_s \) are independently drawn from their respective cumulative distribution functions \( F_r \) and \( F_s \), with probability density functions \( f_r(v_r) \) and \( f_s(v_s) \). The domain excluding retailer \( r \) is \( V_{-r} = V^R_{-r} \times V^S \), where \( V^R_{-r} = \prod_{r' \in R \setminus \{r\}} V^R_{r'} \), and the domain excluding supplier \( s \) is \( V_{-s} = V^R \times V^S_{-s} \), where \( V^S_{-s} = \prod_{s' \in S \setminus \{s\}} V^S_{s'} \).

The auctioneer knows the distribution \( F = (F_i)_{i \in R \cup S} \), but not the realized valuation profile \( \mathbf{v} = (v_i)_{i \in R \cup S} \). Bidders report their valuations as bids \( \mathbf{b} = (b_i)_{i \in R \cup S} \), where \( b_r \in V_r \) for retailer \( r \) and \( b_s \in V_s \) for supplier \( s \). The auctioneer’s personal value for each slot, if unallocated, is denoted by \( v_0 \).

Joint advertisements involving retailers or suppliers are defined as \( E_r = \{(r^*, s) \in E \mid r^* = r\} \) for retailer \( r \) and \( E_s = \{(r, s^*) \in E \mid s^* = s\} \) for supplier \( s \). Excluded joint advertisements are denoted as \( E_{-r} \) and \( E_{-s} \), representing bundles that exclude a specific retailer \( r \) or supplier \( s \), respectively.

\begin{definition}[Joint Auction Mechanism]  
 The Joint Auction Mechanism is represented by a pair of rules \( \mathcal{M} = (x, p) \). The allocation rule of bundle \( e \) is denoted by \( x^e: V  \rightarrow 2^M \), and the payment rule of bidder \( i \) for bundle \( e \) is denoted by \( p_i^e: V  \rightarrow \mathbb{R}_{\geq 0} \). The joint auction mechanism \( \mathcal{M} = (x, p) \) is then defined with the following components:
\begin{enumerate}[leftmargin=*]  
    \item \textbf{Allocation and Payment Rules}:  
    For each bundle \( e = (r, s) \in E \), the allocation and payment are shared between the participating bidders \( r \in R \) and \( s \in S \) as follows:  
    \[
    x^e(v) = x_r^e(v) = x_s^e(v), \quad \forall e = (r, s) \in E,
    \]  
    \[
    p^e(v) = p_r^e(v) + p_s^e(v), \quad \forall e = (r, s) \in E.
    \]  
    The total allocation and payment for a participant \( i \in R \cup S \) are:  
    \begin{equation}
    x_i(v) = \sum_{e \in E_i} x^e(v), \quad p_i(v) = \sum_{e \in E_i} p_i^e(v).
    \label{eq:p}
    \end{equation}
    Since each slot can only be allocated to a single bundle, the allocation must satisfy the constraint:
    \begin{equation}
    \sum_{e \in E} x^e(v) \leq \mathbf{1}_m.
    \end{equation}
    \item \textbf{Expected Quasilinear Utility}:  
The expected utility for a participant \( i \in R \cup S \) is given by:  
\[
U_i(v_i, v_i') = \mathbb{E}_{v_{-i} \sim V_{-i}} \left[ v_i x_i(v_i',v_{-i})\boldsymbol{\lambda}^T - p_i(v_i',v_{-i}) \right].
\]  

    \item \textbf{Incentive Compatibility}:  
    A mechanism satisfies IC if:  
    \[
    U_i(v_i, v_i) \geq U_i(v_i, v_i'), \quad \forall i \in R \cup S, \ v_i, v_i' \in V_i.
    \]  

    \item \textbf{Individual Rationality}:  
    A mechanism satisfies IR if:  
    \[
    U_i(v_i, v_i) \geq 0, \quad \forall i \in R \cup S, \ v_i \in V_i.
    \]  

\item \textbf{Expected Revenue}:  
The expected total revenue of the mechanism is:  
\[
U_0 = \mathbb{E}_{\mathbf{v} \sim V} \left[ v_0 \left( \mathbf{1} - \sum_{e \in E} x^e(\mathbf{v}) \right)\boldsymbol{\lambda}^T + \sum_{e \in E} p^e(\mathbf{v}) \right].
\]
\end{enumerate}  
\end{definition}

\section{Optimal Joint Auction Design with Single Slot}\label{Optimal Joint Auction Design with Single Slot}

We say that a mechanism \(\mathcal{M}\) is feasible if and only if the mechanism \(\mathcal{M}\) satisfies the conditions of IR and IC. For the single-item environment, the most well-known theory for this case is Myerson’s Lemma \cite{myerson1981optimal}, which provides a necessary and sufficient condition for the feasibility of a mechanism.


\begin{lemma}[Myerson’s Lemma]
In the single-item setting, a mechanism \( (x, p) \) is feasible if and only if the following conditions hold:
\begin{itemize}
    \item[(a)] An allocation rule $ x $ is  monotonically non-decreasing.
    \item[(b)] If \( x \) is value-monotonic, then there is a unique payment rule \( p \) for which \( (x, p) \) is IR and IC and the winner pays her critical bid and the losers pay zero.
\end{itemize}
\end{lemma}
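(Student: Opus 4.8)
The plan is to prove Myerson's Lemma in the standard way, adapting it to the joint-advertising notation where the ``good'' each bidder receives is the allocation $x_i(v)\boldsymbol{\lambda}^T$ rather than a raw probability. First I would fix a bidder $i$ and a profile of others' bids $v_{-i}$, and write the interim allocation $X_i(v_i) = \mathbb{E}_{v_{-i}}[x_i(v_i,v_{-i})\boldsymbol{\lambda}^T]$ and interim payment $P_i(v_i) = \mathbb{E}_{v_{-i}}[p_i(v_i,v_{-i})]$, so that $U_i(v_i,v_i') = v_i X_i(v_i') - P_i(v_i')$. The IC constraints $U_i(v_i,v_i) \ge U_i(v_i,v_i')$ for all $v_i,v_i'$ are then exactly the condition that the family of affine functions $v_i \mapsto v_i X_i(v_i') - P_i(v_i')$ is dominated pointwise by its upper envelope $\bar U_i(v_i) := U_i(v_i,v_i)$, which is therefore convex.

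For necessity of monotonicity: taking two types $v_i < w_i$ and adding the two IC inequalities (truthful at $v_i$ beats reporting $w_i$, and truthful at $w_i$ beats reporting $v_i$) yields $(w_i - v_i)(X_i(w_i) - X_i(v_i)) \ge 0$, hence $X_i$ is non-decreasing; since this holds in expectation for every $v_{-i}$, one can (with the usual independence/pointwise argument the paper is implicitly using) also phrase it as monotonicity of $x$. For the payment formula: $\bar U_i$ is convex, hence absolutely continuous with $\bar U_i'(v_i) = X_i(v_i)$ at every point of differentiability (this is the envelope theorem — the derivative of the upper envelope equals the derivative of the active affine piece, whose slope is $X_i(v_i)$). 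Integrating, $\bar U_i(v_i) = \bar U_i(0) + \int_0^{v_i} X_i(t)\,dt$, and unpacking $\bar U_i(v_i) = v_i X_i(v_i) - P_i(v_i)$ gives $P_i(v_i) = v_i X_i(v_i) - \int_0^{v_i} X_i(t)\,dt - \bar U_i(0)$. IR forces $\bar U_i(0) = -P_i(0) \ge 0$, and revenue maximization (or the ``losers pay zero'' normalization in the statement) forces $\bar U_i(0) = 0$, pinning down $P_i$ uniquely; translating back to the ex-post rule, the winner's payment is the infimum of bids at which she still wins, i.e.\ her critical bid, and a loser pays zero.

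For sufficiency: given a monotone $x$, define $p$ by the critical-bid formula (equivalently the integral formula above with $\bar U_i(0)=0$). Then a direct computation shows $U_i(v_i,v_i) - U_i(v_i,v_i') = \int_{v_i'}^{v_i}(X_i(t) - X_i(v_i'))\,dt$ when $v_i' < v_i$ and the analogous expression with signs reversed when $v_i' > v_i$; in both cases monotonicity of $X_i$ makes the integrand have the right sign, so the difference is $\ge 0$ and IC holds. IR holds because $U_i(v_i,v_i) = \int_0^{v_i} X_i(t)\,dt \ge 0$. The main obstacle is not any single step — each is routine — but rather being careful about the bundle structure: the relevant ``quantity'' for bidder $i$ is $x_i(v) = \sum_{e\in E_i} x^e(v)$ summed over all bundles containing $i$, the CTR weighting $\boldsymbol{\lambda}^T$ must be carried through, and one must decide in what sense monotonicity of the ex-post rule $x$ (as opposed to the interim rule $X_i$) is being claimed, since only the interim quantity is literally controlled by the IC inequalities; I would state the lemma's part (a) as a condition on the interim allocation, or invoke the paper's independence assumption to pass between the two.
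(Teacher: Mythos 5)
The paper does not actually prove this lemma: it is stated as a citation to Myerson (1981), and the appendix restates an equivalent interim version (monotonicity of $Q_i$, the payment/utility integral identity, and the non-negativity and feasibility conditions) again as a cited result without proof. So there is no in-paper argument to compare against; what you have written is the standard envelope-theorem proof, and it is correct. Your necessity direction (adding the two crossed IC inequalities to get $(w_i-v_i)(X_i(w_i)-X_i(v_i))\ge 0$, then using convexity of the upper envelope of the affine functions $v_i\mapsto v_iX_i(v_i')-P_i(v_i')$ to derive $\bar U_i(v_i)=\bar U_i(a_i)+\int_{a_i}^{v_i}X_i(t)\,dt$) and your sufficiency direction (the identity $U_i(v_i,v_i)-U_i(v_i,v_i')=\int_{v_i'}^{v_i}\bigl(X_i(t)-X_i(v_i')\bigr)dt$ and its sign under monotonicity) are exactly the content the paper's appendix imports as Lemma B.1 conditions (a)--(c). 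Two small points: the lower limit of your integrals should be $a_i$, the lower endpoint of the support of $F_i$, rather than $0$ (the paper's appendix is explicit about this); and your closing caveat about interim versus ex-post monotonicity is well taken --- the paper's statement of part (a) is loose on exactly this point (and part (b) is itself a conditional, so the ``if and only if'' framing is awkward), and your suggestion to read (a) as a condition on the interim rule $Q_i$ is precisely how the appendix version resolves it.
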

Extending this framework to a single-slot joint advertisement scenario, the feasibility conditions remain consistent with traditional auction mechanisms. The primary difference lies in the allocation and payment rules, which are adapted to account for joint advertisement characteristics. Myerson-like mechanisms can thus be applied to the joint advertisement setting through appropriate modifications. 

In this paper, we consider distributions that satisfy the \textit{regular} condition, which is crucial in Myerson's auction theory:
\begin{definition}
    A distribution \( V \) is termed \textbf{regular} if its associated \textit{virtual value function}
    \[
    c(v) = v - \frac{1 - F(v)}{f(v)}
    \]
    is monotonically increasing function of $v$ in the support of \( V \).
\end{definition}
To formulate these modifications, we first introduce the concept of bundles and neighbors, which are critical for defining the allocation and payment rules in the optimal joint auction mechanism. 
The neighbors of a node are defined as \(N(r) = \{s \in S \mid (r, s) \in E\}\) for retailer \(r\) and \(N(s) = \{r \in R \mid (r, s) \in E\}\) for supplier \(s\). These structures allow us to define the virtual value functions for retailers and suppliers as \(c_r(v_r) = v_r - \frac{1 - F_r(v_r)}{f_r(v_r)}\) and \(c_s(v_s) = v_s - \frac{1 - F_s(v_s)}{f_s(v_s)}\), respectively. Using these virtual values, we identify the maximum adjacent nodes, where \(s^M_r = \arg\max_{s \in N(r)} c_s(v_s)\) for retailer \(r\) and \(r^M_s = \arg\max_{r \in N(s)} c_r(v_r)\) for supplier \(s\). The corresponding bundles are \(e_r^M = (r, s^M_r)\) and \(e_s^M = (r^M_s, s)\), and the virtual value of a bundle \(e = (r, s)\) is defined as the sum of the virtual values of its nodes: \(c^e(v_r, v_s) = c_r(v_r) + c_s(v_s)\). These bundle and neighbor definitions serve as the foundation for the allocation and payment rules in the optimal mechanism.

Building on these structures, we establish the necessary and sufficient conditions for an optimal joint auction mechanism in the single-slot scenario. The optimal mechanism aims to achieve both feasibility (satisfying IR and IC) and revenue maximization.
\begin{theorem}\label{thm:optimal_joint_auction}
    For the single-slot joint advertisement with regular bidders, 
    A deterministic joint auction mechanism \( \mathcal{M} \) is optimal if and only if for all \( i \in R \cup S \),
    \begin{itemize}
        \item[(i)] \textit{Step Function}: 
        \[
        x_i^{\mathcal{M}}(v_i, v_{-i}) = 
        \begin{cases} 
        1 & \text{if } v_i > \hat{v}_i(v_{-i}) \\
        0 & \text{otherwise}
        \end{cases}.
        \]
        \item[(ii)] \textit{Critical Value}: 
        \[
        p_i^{\mathcal{M}}(v_i, v_{-i}) = 
        \begin{cases} 
        \hat{v}_i(v_{-i}) & \text{if } v_i > \hat{v}_i(v_{-i}) \\
        0 & \text{otherwise}
        \end{cases}.
        \]
    \end{itemize}
    where the critical value \(\hat{v}_i(v_{-i})\) is defined as follows:
    \begin{itemize}[leftmargin=*]
        \item For \( r \in R \), the critical value \(\hat{v}_r(v_{-r})\) is:
            \begin{align*}
            \hat{v}_r(v_{-r}) = \inf\{ b_r \mid &c^{e_r^M}(b_r, v_{s_r^M}) \geq v_0 \wedge  \\
            &c^{e_r^M}(b_r, v_{s_r^M}) \geq c^{\hat{e}}(v_{\hat{r}}, v_{\hat{s}}), \forall \hat{e} \in E_{-r} \}.
            \end{align*}
        \item For \( s \in S \), the critical value \(\hat{v}_s(v_{-s})\) is defined similarly:
        \begin{align*}
        \hat{v}_s(v_{-s}) = \inf\{b_s| & c^{e_s^M}(v_{r_s^M},b_{s})\ge v_0 \wedge \\
        &c^{e_s^M}(v_{r_s^M},b_{s})\ge c^{\hat{e}}(v_{\hat{r}}, v_{\hat{s}}),\forall \hat{e} \in E_{-s}\}.
        \end{align*}
    \end{itemize}
\end{theorem}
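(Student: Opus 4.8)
The plan is to mimic Myerson's virtual-value argument, with bundles playing the role of items. Normalize the single slot to have CTR $\lambda_1=1$ (the general case only rescales everything by $\lambda_1$). I will first derive an upper bound on the expected revenue $U_0$ that holds for \emph{every} feasible mechanism, written as an expected ``virtual welfare'' of the allocated bundle; then I will show that the mechanism in the statement is feasible and meets this bound, hence is optimal, and that conversely any optimal deterministic mechanism must take that form up to a measure-zero set of profiles.

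For the reduction, fix a feasible $\mathcal M=(x,p)$. Since at most one slot is allocated, $x_i(v)=\sum_{e\in E_i}x^e(v)\in\{0,1\}$ for each bidder $i\in R\cup S$, so from $i$'s perspective the mechanism is a single-parameter object: reporting $v_i'$ yields service probability $\bar x_i(v_i')=\mathbb E_{v_{-i}}[x_i(v_i',v_{-i})]$ and expected payment $\bar p_i(v_i')$. By Myerson's Lemma applied in this slice, IC makes $\bar x_i$ non-decreasing and, together with IR, pins down $\bar p_i(v_i)=v_i\bar x_i(v_i)-\int_0^{v_i}\bar x_i(t)\,dt$; integrating by parts against the regular virtual-value function $c_i$ gives $\mathbb E[p_i]=\mathbb E[c_i(v_i)\,x_i(v)]$. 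Summing over all bidders and expanding $x_i=\sum_{e\in E_i}x^e$, the indicator $x^e$ of a bundle $e=(r,s)$ appears exactly twice — once in $x_r$ and once in $x_s$ — so
\[
\sum_{i\in R\cup S}\mathbb E[p_i]=\mathbb E\Bigl[\sum_{e=(r,s)\in E}\bigl(c_r(v_r)+c_s(v_s)\bigr)x^e(v)\Bigr]=\mathbb E\Bigl[\sum_{e\in E}c^e(v)\,x^e(v)\Bigr].
\]
Plugging this into the revenue formula yields $U_0=v_0+\mathbb E\bigl[\sum_{e\in E}(c^e(v)-v_0)\,x^e(v)\bigr]$, subject only to $x^e(v)\in\{0,1\}$ and $\sum_{e\in E}x^e(v)\le 1$.

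Now maximize pointwise. For each profile $v$, the best choice is to award the slot to a bundle in $\arg\max_{e\in E}c^e(v)$ when its virtual value exceeds $v_0$ and to no one otherwise. Fixing $v_{-r}$, the best bundle containing $r$ is $e_r^M=(r,s_r^M)$ with $s_r^M=\arg\max_{s\in N(r)}c_s(v_s)$ (which does not depend on $v_r$), and its virtual value $c^{e_r^M}(v_r,v_{s_r^M})=c_r(v_r)+c_{s_r^M}(v_{s_r^M})$ is strictly increasing in $v_r$ by regularity, whereas $v_0$ and all $c^{\hat e}(v)$ with $\hat e\in E_{-r}$ are independent of $v_r$. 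Hence the event ``$r$ is served'' is upward closed in $v_r$ with threshold exactly the infimum $\hat v_r(v_{-r})$ stated in the theorem, which gives the step allocation (i) for $r$; the argument for $s\in S$ is symmetric. This allocation is simultaneously monotone in every bidder's report, so it is implementable, and Myerson's payment identity makes the served bidder pay her critical value $\hat v_i(v_{-i})$ and the others pay zero, which is (ii). Therefore the stated mechanism is feasible and attains the upper bound, hence optimal. Conversely, since the integrand above is pointwise dominated by the integrable function $\max_e(c^e(v)-v_0)^+$, optimality forces $x$ to agree a.e.\ with the virtual-welfare maximizer, and IC then forces the payments, so any optimal deterministic mechanism coincides with the stated one outside a measure-zero set.

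I expect the main obstacle to be careful bookkeeping rather than a single hard idea: (a) justifying the per-bidder single-parameter reduction even though each bidder lies in many bundles (which hinges on $\sum_e x^e\le 1\Rightarrow x_i\in\{0,1\}$); (b) the double-counting identity converting $\sum_i c_i(v_i)x_i(v)$ into $\sum_e c^e(v)x^e(v)$; and (c) the boundary cases — ties among the maximizing bundles, ties with $v_0$, or retailers/suppliers with no incident bundle — which have probability zero under regular (atomless) distributions and account for the ``a.e.'' qualifier in the equivalence. Showing the $\arg\max$ allocation is monotone for the optimum also needs a fixed tie-breaking rule, but this choice does not change the expected revenue.
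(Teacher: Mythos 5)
Your proposal is correct and follows essentially the same route as the paper: both reduce each bidder to a single-parameter Myerson slice, rewrite expected revenue as the expected virtual welfare of the allocated bundle (your double-counting identity $\sum_i c_i(v_i)x_i(v)=\sum_e c^e(v)x^e(v)$ is exactly the regrouping the paper performs between its Equations (7) and (8)), and then maximize pointwise using regularity to obtain the threshold allocation and critical-value payments. Your treatment is somewhat more explicit about the ``only if'' direction and the measure-zero tie cases, which the paper's proof passes over silently, but the underlying argument is the same.
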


See Appendix \ref{proof:optimal} for detailed proofs.

\section{Optimal Joint Auction Design as Learning Problem}
In multi-slot joint advertisement, bids for bundles are jointly determined by two bidders, making them interdependent and challenging to handle. To address this, we propose \textbf{BundleNet}, a novel neural network architecture with a corresponding learning methodology for optimal mechanism design. Section \ref{Differentiable Approximation Approach for Joint Auction Design} introduces a bundle-based IC constraint, Section \ref{Neural Network Architecture} details BundleNet's architecture, and Section \ref{Optimization and Training} elaborates on its loss function and training process.

\subsection{Differentiable Approximation Approach for Joint Auction Design}\label{Differentiable Approximation Approach for Joint Auction Design}

In the domain of automated mechanism design, particularly within the class of RegretNet-based approaches \cite{dutting2024optimal, duan2022context, Ivanov2022RegretFormer, zhang2024jregnet}, the optimization problem typically involves balancing two conflicting objectives: maximizing revenue and minimizing regret. The trade-off between these objectives is controlled by hyperparameters, such as the initial values and schedules of the Lagrangian multipliers. 

In the context of joint advertisements, \cite{zhang2024jregnet} extended this framework by introducing IC constraints for all participants, including both retailers and suppliers. Their model is formalized as:
\begin{align*}
\min_{w \in \mathbb{R}^d} \quad & -\mathbb{E}_{v \sim V} \left[ \sum_{i \in R \cup S} p_i(v;w) \right] \\
\text{s.t.} \quad & rgt_i(w) = 0, \quad \forall i \in R \cup S,
\end{align*}
where \( w \) represents the parameters of the neural network used for the automated mechanism. The regret function \( rgt_i(w) \) is defined as:
\begin{align*}
    rgt_i(w) = \mathbb{E}_{v \sim V} \left[ \max_{v'_i \in V_i} u_i(v_i; (v'_i, v_{-i}); w) - u_i(v_i; (v_i, v_{-i}); w) \right].
\end{align*}

In joint advertisement scenarios, the bidding strategies of bidders have different impacts on the mechanism, as their degrees in the bipartite graph are not identical. However, from the perspective of bundles in joint advertisements, we observe that the properties of each bundle are generally similar. Each bundle contains bids from both retailers and suppliers. Therefore, we propose redefining the IC constraints by constructing them for bundles instead of individual bidders. Although it may seem that a bundle, consisting of both retailers and suppliers, could be modeled as a two-parameter auction problem, the bundle itself does not have a strategy: It depends on the bidding strategies of the connected retailer and supplier. As a result, directly defining an IC constraint for bundles is highly challenging. To address this, we propose a method to expand the feasible domain, which derives an IC constraint from the bundle’s perspective. The key feature of this new constraint is that it encloses the feasible region defined by the original IC constraints. By forcing the new constraint to approach zero, the original IC constraints also approach zero, ultimately ensuring the incentive compatibility of the entire mechanism. We define the ex-post regret for bundles as follows:
\begin{align*}
& rgt^e(w) = \\
&\mathbb{E}_{v \sim V}\left[ \max_{v'_r \in V_r} \left\{ u_r^e(v_r, (v'_r, v_{-r});w) - u_r^e(v_r, (v_r, v_{-r});w) \right\} \right. \\
&+ \left. \max_{v'_s \in V_s} \left\{ u_s^e(v_s, (v'_s, v_{-s});w) - u_s^e(v_s, (v_s, v_{-s});w) \right\} \right].
\end{align*}

Lemma \ref{lemma} demonstrates the relationship between the original constraints and the newly introduced constraints:
\begin{lemma}\label{lemma}
In a joint advertisement, the sum of the IC constraints for bundles is always greater than or equal to the sum of the IC constraints for individual bidders, as follows:
\[
\sum_{i \in R\cup S} rgt_i(w) \leq \sum_{e \in E} rgt^e(w).
\]
\end{lemma}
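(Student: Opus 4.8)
The plan is to exploit the additive structure of each bidder's utility across the bundles incident to them, combined with subadditivity of the maximum. The key point is that a bidder submits a \emph{single} report that is simultaneously felt by every incident bundle, so maximizing the total deviation gain cannot exceed separately maximizing each bundle's deviation gain.

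First I would record the additive decomposition. Write $u_i^e(v_i, (v_i', v_{-i}); w) = v_i\, x^e(v_i', v_{-i})\, \boldsymbol{\lambda}^T - p_i^e(v_i', v_{-i})$ for the ex-post utility bidder $i$ derives from bundle $e \in E_i$; by \eqref{eq:p}, $u_i(v_i; (v_i', v_{-i}); w) = \sum_{e \in E_i} u_i^e(v_i, (v_i', v_{-i}); w)$. Hence, for each realized profile $v$,
\[
\max_{v_i' \in V_i} \sum_{e \in E_i} \bigl[ u_i^e(v_i, (v_i', v_{-i}); w) - u_i^e(v_i, (v_i, v_{-i}); w) \bigr] \le \sum_{e \in E_i} \max_{v_i' \in V_i} \bigl[ u_i^e(v_i, (v_i', v_{-i}); w) - u_i^e(v_i, (v_i, v_{-i}); w) \bigr],
\]
which is the elementary fact that the maximum of a sum is at most the sum of the maxima. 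Taking expectations over $v$ and using linearity of expectation gives $rgt_i(w) \le \sum_{e \in E_i} g_i^e(w)$, where $g_i^e(w) := \mathbb{E}_{v}\bigl[ \max_{v_i'}( u_i^e(v_i, (v_i', v_{-i}); w) - u_i^e(v_i, (v_i, v_{-i}); w)) \bigr]$ is bidder $i$'s expected single-bundle deviation gain on $e$.

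Next I would sum over all $i \in R \cup S$ and re-index by bundles: since each $e = (r, s) \in E$ lies in exactly two of the sets $E_i$, namely $E_r$ and $E_s$, swapping the order of summation turns $\sum_{i \in R \cup S} \sum_{e \in E_i} g_i^e(w)$ into $\sum_{e = (r,s) \in E} \bigl( g_r^e(w) + g_s^e(w) \bigr)$, which equals $\sum_{e \in E} rgt^e(w)$ by the definition of $rgt^e$ and linearity of expectation. Chaining the two bounds yields $\sum_{i \in R \cup S} rgt_i(w) \le \sum_{e \in E} rgt^e(w)$.

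I expect no serious obstacle; the argument is essentially bookkeeping. The one step demanding care is the subadditivity inequality, where it is essential that $\max_{v_i'}$ ranges over a single shared report --- this is precisely what orients $\max_{v_i'} \sum_e (\cdot) \le \sum_e \max_{v_i'} (\cdot)$ so that the bundle regret becomes an \emph{upper} bound on the individual regret rather than a lower bound. The re-indexing step also tacitly relies on the mechanism convention $x_r^e = x_s^e = x^e$ and $p^e = p_r^e + p_s^e$, which guarantees that each bundle contributes exactly one retailer term and one supplier term, with no double counting.
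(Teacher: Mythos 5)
Your proof is correct and follows essentially the same route as the paper's: both decompose each bidder's utility additively over the incident bundles via Equation~\eqref{eq:p}, apply the subadditivity of the maximum ($\max_{v_i'}\sum_{e\in E_i}(\cdot)\le\sum_{e\in E_i}\max_{v_i'}(\cdot)$), and re-index the resulting double sum over bundles, using that each edge $e=(r,s)$ belongs to exactly $E_r$ and $E_s$. The only difference is organizational (you bound each $rgt_i$ first and then re-index, while the paper splits into retailer and supplier sums before applying the same inequality), which does not change the substance of the argument.
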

The detailed proof is provided in Appendix \ref{proof:lemma}. IC constraints are designed to penalize situations where the utility from misreporting exceeds that of truthful reporting. When the utility from misreporting is less than the utility from truthful bidding, the IC constraint becomes inactive. Therefore, \( rgt_i(w) \geq 0 \). When the sum of penalties imposed by the bundle IC constraints approaches zero, the IC penalties for all bidders also approach zero. With Lemma \ref{lemma} guaranteeing the desired properties, we reformulate the optimal joint advertisement design as the following constrained optimization problem:
\begin{align}
\min_{w \in \mathbb{R}^d} \quad & -\mathbb{E}_{v \sim V} \left[ \sum_{e \in E} p^e(v;w) \right] \\
\text{s.t.} \quad & rgt^e(w) = 0, \quad \forall e \in E. \nonumber 
\end{align}
To solve this, we aim to use sampling-based methods to learn the mechanism \( \mathcal{M} = (x, p) \). Given a sample \( S \) of \( L \) valuation profiles drawn from a distribution \( F \), we estimate the ex-post regret for each bundle as:
\begin{align*}
& \widehat{rgt}^e(w) = \\
& \frac{1}{L} \sum_{\ell=1}^L \Bigg[ 
\max_{v'_r \in V_r} \Big\{ u_r^e\big(v_r^{(\ell)}, (v'_r, v_{-r}^{(\ell)});w\big) - u_r^e\big(v_r^{(\ell)}, v^{(\ell)};w\big) \Big\} \\
&+ \max_{v'_s \in V_s} \Big\{ u_s^e\big(v_s^{(\ell)}, (v'_s, v_{-s}^{(\ell)});w\big) 
- u_s^e\big(v_s^{(\ell)}, v^{(\ell)};w\big) \Big\} \Bigg] .
\end{align*}
Finally, we reformulate the original optimization problem as minimizing the empirical loss of negated revenue, subject to the constraint that the empirical ex-post regret for all bundles is zero:
\begin{align}\label{optimize}
\min_{w \in \mathbb{R}^d} \quad & -  \frac{1}{L}\sum_{\ell=0}^{L}\sum_{e \in E} p^e(v^{(\ell)};w)  \\
\text{s.t.} \quad & \widehat{rgt}^e(w) = 0, \quad \forall e \in E.\nonumber 
\end{align}
Additionally, we utilize a neural network architecture to ensure that the mechanism satisfies the conditions of IR in Sec \ref{Neural Network Architecture}.

\begin{figure*}[t]
  \centering
  \includegraphics[width=0.7\textwidth]{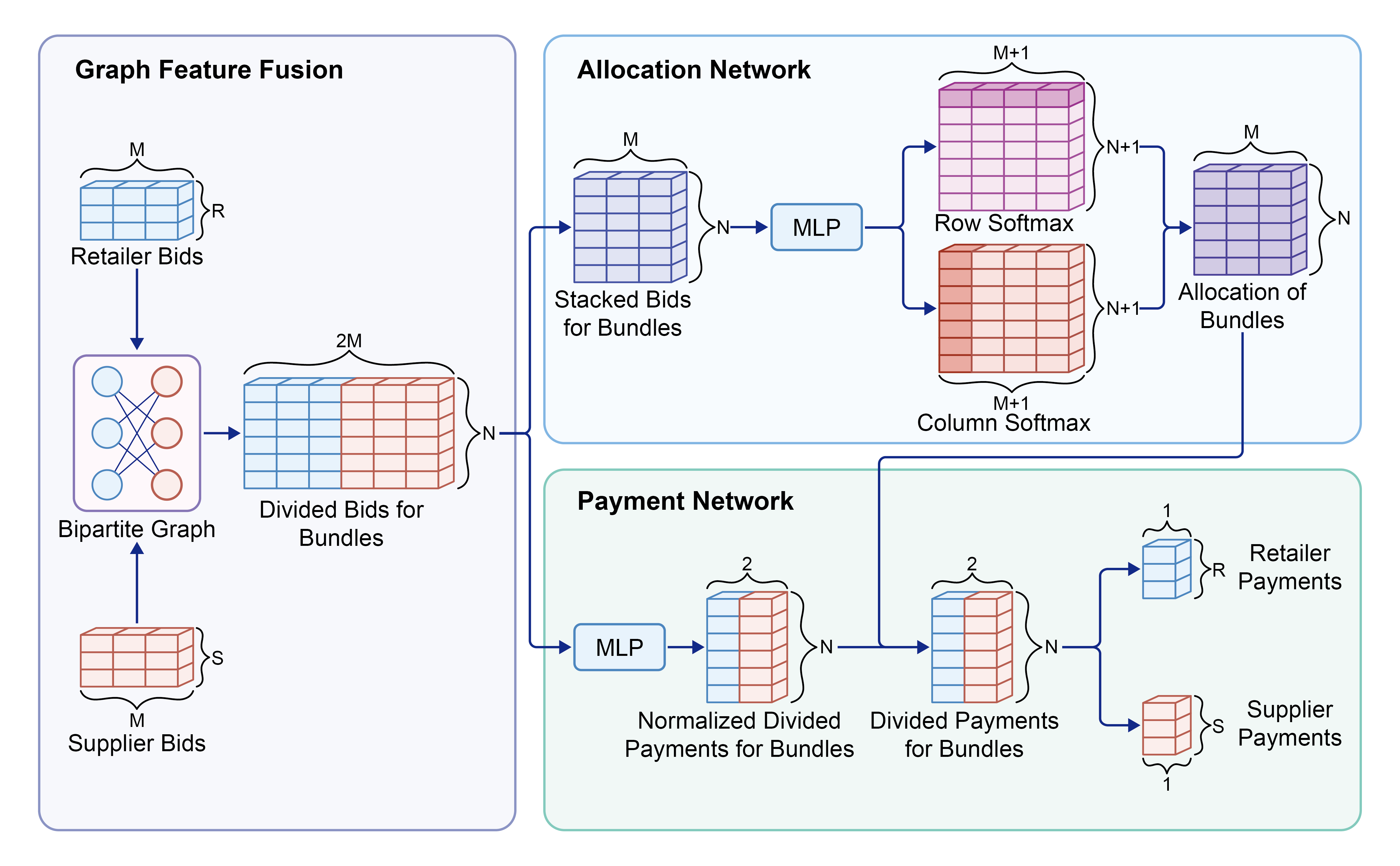}
  \caption{The neural network architecture BundleNet. Details are shown in Sec \ref{Neural Network Architecture}.}
  \label{network}
\end{figure*}

\subsection{Neural Network Architecture}\label{Neural Network Architecture}

This section introduces our neural network-based approach for mechanism design in joint advertisements. Our architecture comprises two primary components: \textbf{Allocation Network} and \textbf{Payment Network} . 
As illustrated in Figure \ref{network}, we employ a graph-based approach to model the interactions between retailers and suppliers effectively. This process, termed \textbf{Graph Feature Fusion}, involves aggregating node features from a bipartite graph into edge features that capture the combined characteristics of retailer-supplier pairs (bundles). 
 
Each node in the bipartite graph  \( G = (R, S, E) \) is associated with a feature vector representing the bidder's cost per click (CPC), defined as $X_r=b_r\boldsymbol{\lambda}\in \mathbb{R}_{\geq 0}^m$ for retailer $r$ or $X_s= b_s\boldsymbol{\lambda}\in \mathbb{R}_{\geq 0}^m$ for supplier $s$.

To capture the combined influence of retailers and suppliers on each bundle, we aggregate their node features into edge features, which we refer to as \textbf{Divided Bids for the Bundle}. This process is mathematically formalized as follows:
$$
DB^e = [X_r,X_s] \in  \mathbb{R}_{\geq 0}^{2m}, \quad \forall e=(r,s) \in E,
$$
where \(e=(r,s)\) represents an edge between a retailer $r$ and a supplier $s$. We denote $DB^E$ as the features of all edges in the bipartite graph.


By summing up the aggregated edge features, we derive the \textbf{Stacked Bids for the Bundle}, which is defined as
$$
SB^e=X_r+X_s \in  \mathbb{R}_{\geq 0}^{m}, \quad \forall e=(r,s) \in E.
$$
Similarly, we use $SB^E$ to represent the aggregated features of all edges in the bipartite graph.

In the \textbf{Allocation Network}, similar to RegretNet, we utilize a doubly stochastic matrix approach to ensure that each bundle is allocated to exactly one slot and each slot is assigned to only one bundle, reflecting the joint advertisement scenario. A key property of doubly stochastic matrices is encapsulated in the following lemma:
\begin{lemma}[\cite{dutting2024optimal}]
The matrix $\phi^{DS}(x, x')$ is doubly stochastic $\forall x, x' \in \mathbb{R}^{nm}$. For any doubly stochastic matrix $a \in [0, 1]^{nm}$, there exist $x, x'  \in \mathbb{R}^{nm}$, for which $a = \phi^{DS}(x, x')$.
\[
a_{ij} = \phi^{DS}_{ij}(x, x' ) = \min \left\{ \frac{e^{x_{ij}}}{\sum_{k=1}^{n+1} e^{x_{kj}}}, \frac{e^{ x'_{ij}}}{\sum_{k=1}^{m+1} e^{ x'_{ik}}} \right\}.
\]
\end{lemma}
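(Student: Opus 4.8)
Since this lemma restates a result from \cite{dutting2024optimal}, the plan is to reproduce their short argument in two parts. Write $s^{\mathrm{col}}_{ij} := e^{x_{ij}}/\sum_{k=1}^{n+1} e^{x_{kj}}$ for the column-softmax term and $s^{\mathrm{row}}_{ij} := e^{x'_{ij}}/\sum_{k=1}^{m+1} e^{x'_{ik}}$ for the row-softmax term, so that $\phi^{DS}_{ij}(x,x') = \min\{s^{\mathrm{col}}_{ij}, s^{\mathrm{row}}_{ij}\}$. Each $s^{\mathrm{col}}_{ij}, s^{\mathrm{row}}_{ij} \in (0,1)$, so $\phi^{DS}_{ij}(x,x') \ge 0$; and since $\phi^{DS}_{ij}(x,x') \le s^{\mathrm{col}}_{ij}$ while the $n+1$ values $\{s^{\mathrm{col}}_{kj}\}_k$ sum to exactly $1$, summing over the real rows of a fixed column $j$ gives $\sum_i \phi^{DS}_{ij}(x,x') \le 1$; symmetrically, $\phi^{DS}_{ij}(x,x') \le s^{\mathrm{row}}_{ij}$ and the row-softmax values sum to $1$, so $\sum_j \phi^{DS}_{ij}(x,x') \le 1$. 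Hence $\phi^{DS}(x,x')$ is nonnegative with all row and column sums at most one --- the sense of ``doubly stochastic'' used for allocations here --- which settles the first assertion.

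For the second assertion, given a doubly stochastic $a$ I would force both softmax terms to equal $a$ on the $n \times m$ block by letting a dummy slot and a dummy bidder absorb the slack. Define the augmented entries $\tilde a_{n+1,j} := 1 - \sum_{i=1}^n a_{ij} \ge 0$ and $\tilde a_{i,m+1} := 1 - \sum_{j=1}^m a_{ij} \ge 0$ (nonnegativity is exactly double stochasticity of $a$), and set $x_{ij} := \log a_{ij}$ with $x_{n+1,j} := \log \tilde a_{n+1,j}$, and $x'_{ij} := \log a_{ij}$ with $x'_{i,m+1} := \log \tilde a_{i,m+1}$, where the dummy entries are understood to be part of the logit arrays. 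Then $\sum_{k=1}^n a_{kj} + \tilde a_{n+1,j} = 1$, so $s^{\mathrm{col}}_{ij} = a_{ij}$, and likewise $s^{\mathrm{row}}_{ij} = a_{ij}$; therefore $\phi^{DS}_{ij}(x,x') = \min\{a_{ij}, a_{ij}\} = a_{ij}$, i.e. $a = \phi^{DS}(x,x')$.

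The single delicate point --- and the step I would treat most carefully --- is that softmax outputs are strictly positive, so $\log a_{ij}$ (or $\log \tilde a_{\cdot,\cdot}$) equals $-\infty$ whenever the corresponding entry vanishes, which occurs on the boundary of the doubly-stochastic polytope. I would handle this either by allowing extended-real logits with the convention $e^{-\infty} = 0$, under which the explicit construction above is literally valid, or by a limiting argument: approximate $a$ by matrices $a^{(\varepsilon)} \to a$ all of whose entries and slacks are strictly positive, represent each $a^{(\varepsilon)}$ with finite logits as above, and pass to the limit using the joint continuity of $\phi^{DS}$. Apart from this technicality the proof is entirely routine: no combinatorial, fixed-point, or optimization argument is needed, since the dummy-row/column padding makes the representation fully explicit.
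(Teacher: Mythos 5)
Your argument is correct and is essentially the standard proof of this result: the paper itself gives no proof here, since the lemma is imported verbatim from \cite{dutting2024optimal}, and your two-part argument (sub-stochasticity from bounding the min by either softmax, and exact representation via dummy-row/column padding with logits $\log a_{ij}$) is exactly the construction used there. Your flagged caveat is real but minor: with strictly finite logits the softmax outputs are strictly positive, so matrices $a$ with a zero entry are only attained in the closure of the image, and the limiting argument you sketch proves density rather than the literal existence claim --- the extended-real convention $e^{-\infty}=0$ is the reading under which the statement holds exactly, and this imprecision is inherited from the cited source rather than introduced by you.
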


The aggregated edge features \( SB^E \), serve as the input to a multi-layer perceptron (MLP), producing an intermediate output denoted as \( Y \in \mathbb{R}^{d_y} \), where \( d_y \) represents the dimensionality of the output feature space:
\[
Y = \text{MLP}(SB^E) \in \mathbb{R}^{d_y}.
\]
The output \( Y \) contains values that reflect the potential allocation results for each bundle across available slots. To normalize these values, we apply two distinct transformations using matrices \( W_{\text{row}} \in \mathbb{R}^{d_y\times (n+1)(m+1)}\) and \( W_{\text{col}} \in \mathbb{R}^{d_y\times (n+1)(m+1)} \).

The first transformation involves multiplying \( Y \) by \( W_{\text{row}} \), resulting in a matrix that captures the allocation potential for each slot:
\[
MR = YW_{\text{row}}\in \mathbb{R}^{(n+1)(m+1)}.
\]
Next, we apply the row-wise softmax function to \( R \). The softmax function transforms the raw scores into a probability distribution across the bundles for each slot.
\[
DR_{ij} = \frac{\exp(MR_{ij})}{\sum_{k=1}^{m+1} \exp(MR_{ik})}, \quad \forall i \in \{1, \ldots, n+1\}.
\]
\begin{table*}[htbp]
    \centering
    \begin{tabular}{lccccccccc}
        \hline
        \textbf{Alg.} & \multicolumn{8}{c}{\textbf{Setting}} \\
        \cmidrule(lr){2-9}
        & $U_2$ & $U_3$ & $U_4$ & $U_5$ & $E_2$ & $E_3$ & $E_4$ & $E_5$\\
        \hline
        \textbf{Ours} & & & & &\\
        \quad BundleNet & \textbf{0.5286} & \textbf{0.6681} & \textbf{0.7805} &  \textbf{0.8802} & \textbf{0.4248}&   \textbf{0.5460} &  \textbf{0.6354} &  \textbf{0.7215}\\
        \quad IC Violation & 0.0006 & $<0.001$ & $<0.001$ & $<0.001$ & $<0.001$ & $<0.001$ & $<0.001$ & $<0.001$ \\
        \hline
        IC Baselines & & & & &\\
        \quad RVCG & 0.3811 & 0.6003 & 0.7455 & 0.8607 & 0.2820 & 0.4649 & 0.5927 & 0.7041\\
        \quad Optimal &  0.5247 & 0.6705 & 0.7826 & 0.8819  & 0.4249 & 0.5479 & 0.6470 & 0.7376\\

        \hline
        Baselines with IC Violation & & & & &\\
        \quad JRegNet & 0.5622  & 0.7287 & 0.7791 & 0.7882& 0.4727 & 0.5892 & 0.6306 & 0.6943\\
        \quad IC Violation & 0.00054 & $<0.001$ & $<0.001$ & $<0.001$ & $<0.001$ & $<0.001$ & $<0.001$ & $<0.001$\\
        \bottomrule
    \end{tabular}
    \caption{The experimental results compare BundleNet, JRegNet, Revised VCG, and the Optimal Mechanism as the number of bundles increases under different settings in the single-slot scenario with CTR \( \lambda = (1) \).  The notations \( U_2, U_3, U_4, U_5 \) represent cases where the number of bundles is 2, 3, 4 and 5, respectively, under the uniform distribution \( U(0,1) \). Similarly, \( E_2, E_3, E_4, E_5 \) correspond to scenarios with 2, 3, 4 and 5 bundles under the truncated exponential distribution \( E(2) \). In this table, we use bold to indicate the method among BundleNet, JRegNet, and RVCG that is closest to the optimal mechanism, rather than the one with the highest revenue.}
    \label{tab:results}
\end{table*}

Similarly, we perform a second transformation by multiplying \( Y \) by \( W_{\text{col}} \):
\[
MC = YW_{\text{col}} \in \mathbb{R}^{(n+1)(m+1)}. 
\]

We then apply the column-wise softmax function to \( C \). For each column \( j \) in \( C \), the softmax is defined as:
\[
DC_{ij} = \frac{\exp(MC_{ij})}{\sum_{k=1}^{n} \exp(MC_{k,j})}, \quad \forall j \in \{1, \ldots, m+1\}.
\]

Finally, to ensure consistency in the allocation, we take the element-wise minimum of the two resulting matrices to obtain the final allocation matrix \( A \), where:
$$
A_{ij} = \min(DR_{ij}, DC_{ij}) ,\quad \forall i \in \{1, \ldots, n\}, j \in \{1, \ldots, m\}.
$$

In the \textbf{Payment Network}, to ensure that the auction satisfies ex-post IR, we generate normalized payments for each bundle corresponding to the retailer and supplier, denoted as \( \tilde{p}^i_{r} \) and \( \tilde{p}^i_{s} \in [0, 1] \). We employ an MLP with a Sigmoid activation function in the final layer to represent this mapping:
\[
\tilde{p} = \text{Sigmoid}(\text{MLP}(DB^E)) \in \mathbb{R}^{2n}.
\]

Subsequently, we derive the corresponding payments as follows:
\[
p^{e_i}_r = \tilde{p}^{e_i}_{r} \sum_{j=1}^{m} A_{ij} X_{r}[j],
\]
\[
p^{e_i}_s = \tilde{p}^{e_i}_{s} \sum_{j=1}^{m} A_{ij} X_{s}[j].
\]
Finally, we can determine the payment for each participating bidder in the advertising auction using the equations defined in Equation~\eqref{eq:p}.
\subsection{Optimization and Training}\label{Optimization and Training}

We optimize the constrained objective (\ref{optimize}) by introducing the augmented Lagrangian method. Our loss function is formulated as follows:
\begin{align*}
& \mathcal{L}_{\rho}(w; \mu) =\\
&-\frac{1}{L} \sum_{\ell=1}^{L} \sum_{e \in E} p^e(v^{(\ell)}) 
+ \sum_{e \in E} \mu_e \widehat{rgt}^e(w) 
+ \frac{\rho}{2} \sum_{e \in E} \left(\widehat{rgt}^e(w)\right)^2.
\end{align*}

where $w$ represents the neural network parameters, \( \lambda_e \) represents the Lagrangian multipliers associated with the constraints, while \( \rho \) is a hyper-parameter controlling the weight of the quadratic penalty term. During optimization, we utilize the Adam optimizer to update our parameters \(w\) as well as the misreports \(v_r'^{(\ell)}\) and \(v_s'^{(\ell)}\) in turn, i.e., we update $w^{new} \in \arg \min_w \mathcal{L}_\rho (w^{old}, \mu^{old})$ and update $\mu_e^{new} = \mu_e^{old} + \rho \cdot rgt^e(w^{new}), \forall e \in E$. The detailed algorithmic specifications can be found in Algorithm \ref{alg:BundleNet}.

\begin{table*}[htbp]
    \centering
    \begin{tabular}{lccccccccc}
        \hline
        \textbf{Alg.} & \multicolumn{8}{c}{\textbf{Setting}} \\
        \cmidrule(lr){2-9}
        & $U_{5\times 5}$ & $U_{6\times 5}$ & $U_{7\times 5}$ & $U_{8\times 5}$ & $U_{9\times 5}$ & $U_{10\times 5}$ & $U_{11\times 5}$ & $U_{12\times 5}$\\
        \hline
        \textbf{Ours} & & & & &\\
        \quad BundleNet & \textbf{1.4982} & \textbf{1.7162} & \textbf{1.9233} &  \textbf{2.0890} & \textbf{2.2210}&   \textbf{2.4047} &  \textbf{2.5649} &  \textbf{2.6495}\\
        \quad IC Violation & $<0.001$ & $<0.001$ & $<0.001$ & $<0.001$ & $<0.001$ & $<0.001$ & $<0.001$ & $<0.001$ \\ 
        \hline
        IC Baseline & & & & &\\
        \quad RVCG &  0.8420 & 1.2854 & 1.6142 & 1.8831  & 2.0991 & 2.2800 & 2.4564 & 2.5868\\
        \hline
        Baselines with IC Violation & & & & &\\
        \quad JRegNet & 1.4972 & 1.6849 & 1.8244 & 1.9350 & 1.9804 & 1.9622 & 1.9763 & 1.9973\\
        \quad IC Violation & $<0.001$ & $<0.001$ & $<0.001$ & $<0.001$ & $<0.001$ & $<0.001$ & $<0.001$ & $<0.001$\\     
        \bottomrule
    \end{tabular}
    \caption{The experimental results of BundleNet, JRegNet, RVCG as the number of bundles increases under different settings in the multi-slots scenario. Similar to those of Table \ref{tab:results}, the notation $U_{5\times 5},\cdots, U_{12\times5}$ represent the settings where the number of bundles varies from 5 to 12, while letting the CTRs of these 5 slots as (1, 0.8, 0.6, 0.4, 0.2).  }
    \label{tab:multi results}
\end{table*}

\section{Experiments}
 In this chapter, we present \textbf{empirical experiments} to demonstrate the effectiveness of BundleNet. All experiments are conducted on a Linux machine equipped with NVIDIA Graphics Processing Unit cores. 
 
\textbf{Baseline methods:} We compare BundleNet with the following baselines:
\begin{itemize}[leftmargin=*]
    \item \textbf{Optimal Joint Auction Mechanism}, a Myerson-like method for optimal mechanism design in joint advertising auctions with a single slot (See Sec. \ref{Optimal Joint Auction Design with Single Slot}).
    \item \textbf{VCG} \cite{Vickrey1961VCG, clarke1971multipart, groves1973incentives}, a classic mechanism satisfying DSIC and IR. In our experiments, we apply the RVCG mechanism \cite{Ma2024JAMA} to the joint advertising.
    \item \textbf{JRegNet} \cite{zhang2024jregnet}, a neural network architecture for near DSIC mechanism design in the joint advertising auction settings which can achieve the near-optimal revenue.
\end{itemize}

\textbf{Evaluation:} We generate training and test data from different distributions. The training set consists of 204,800 samples, while the test set contains 20,480 samples. To assess the performance of each method, we utilize the average empirical ex-post regret of the mechanism: $\widehat{rgt}:= \frac{1}{2n} \sum_{i\in R \cup S} \widehat{rgt}_i$. Since in real-world scenarios, $v_0 = 0$, we also consider empirical revenue: $\text{rev}:= \frac{1}{L} \sum_{\ell=1}^{L} \sum_{e\in E} p^e(v^{(\ell)})$. In all synthetic data experiments, the joint relationship matrix between stores and brands is randomly generated for each search request sample.
We evaluate the revenue performance of BundleNet, JRegNet, and RVCG across different distributions.

\subsection{Experimental Results for the Single Slot
}\label{Single Slot}
The experimental setup considers the optimal joint auction mechanism in a single-slot setting, evaluated under three different probability distributions commonly studied in the literature: \textbf{Setting U:} The uniform distribution \( U(0,1) \). \textbf{Setting E:} The truncated exponential distribution \( Exp(2) \) over the interval \( (0,1) \).  \textbf{Setting N:} The truncated normal distribution \( N(0.5,0.1) \) over the interval \( (0,1) \). For each distribution, we examine auction scenarios with a varying number of bundles, specifically ranging from 2 to 5, with a single-slot CTR of 1. 



The experimental results, reported in Table \ref{tab:results} and Table \ref{table:SN} , illustrate the performance of different auction mechanisms under these settings. 
The experimental results demonstrate that BundleNet consistently approximates the optimal mechanism across various settings, whereas JRegNet does not always exhibit such proximity. 
This indicates that, although both methods are based on RegretNet, BundleNet improves upon it by modifying the neural network architecture and optimization approach, enabling it to better learn the optimal mechanism. We provide a visual analysis in the Appendix \ref{visual}. The allocation results of BundleNet are much closer to the optimal mechanism, while the allocation results of JRegNet are significantly different from the optimal mechanism. 

\subsection{Experimental Results for the Multi-Slot}\label{Multi-slot}

We have also conducted extensive experiments to evaluate the performance of BundleNet under the multi-slot scenario from experiments on real-world dataset \cite{zhang2024jregnet}.

Concretely, we explore different variations of this setting, where 10 bundles compete for 5 slots. We fix the number of slots and vary the number of bundles to evaluate the performance of BundleNet in comparison with baseline mechanisms. The bidders' value profiles are sampled from three different probability distributions: \textbf{Setting U}, where values follow the uniform distribution \( U(0,1) \); \textbf{Setting LN}, where values follow the truncated lognormal distribution \( LN(0.1,1.44) \) over the interval \( (0,1) \); \textbf{Setting N}, where values follow the truncated normal distribution \( N(0.5,0.1) \) over the interval \( (0,1) \). For each distribution, we assess different mechanisms by varying the number of bundles from 5 to 12, while keeping the CTRs of the 5 slots fixed at (1, 0.8, 0.6, 0.4, 0.2). 

Regarding the experimental results for \textbf{Setting U}, as shown in Table \ref{tab:multi results}, we conclude that BundleNet achieves higher revenue compared to other baseline mechanisms. BundleNet consistently outperforms VCG across all scenarios, whereas JRegNet surpasses VCG only in some cases. The experimental results for \textbf{Setting LN} and \textbf{Setting N}, provided in Appendix \ref{EXP:MS}, lead to similar conclusions.

\section{Conclusion}
\label{sec: conclusion}
We propose two solutions for the optimal mechanism design in joint advertisement. The first solution in the single-slot scenario, is a Myerson-based approach for the single-slot scenario and an automated RegretNet-inspired method for the multi-slot case. Empirical results show that BundleNet learns a Myerson-like mechanism in the single-slot setting and finds a near-optimal solution outperforming baselines in the multi-slot setting.

\newpage

\section*{Acknowledgments}


This work was supported by National Natural Science Foundation of China No. 62472428, Public Computing Cloud Renmin University of China and fund for building world-class universities (disciplines) of Renmin University of China.

\section*{Impact Statement}

This paper presents work whose goal is to advance the field of Machine Learning. There are many potential societal consequences of our work, none which we feel must be specifically highlighted here.

\nocite{langley00}

\bibliography{main}
\bibliographystyle{icml2025}

\newpage
\appendix
\onecolumn

\section{Optimization and Training Procedures}

\begin{algorithm}[h]
   \caption{BundleNet Training}
   \label{alg:BundleNet}
\begin{algorithmic}
\small
   \STATE {\bfseries Input:} Minibatches $\mathcal{B}_1, \ldots, \mathcal{B}_T$ of size $C$
   \STATE {\bfseries Parameters:} $\forall t, \rho_t > 0$, $\gamma > 0$, $\eta > 0$, $\Gamma \in \mathbb{N}$, $K \in \mathbb{N}$
   \STATE {\bfseries Initialize:} $w^0 \in \mathbb{R}^{d}, \lambda^0 \in \mathbb{R}^{m}$
   \FOR{$t=0$ {\bfseries to} $T$}
   \STATE Receive $\mathcal{B}_t = \{G^{(1)}, \ldots, G^{(B)}\}$
   \STATE Initialize $v_r^{'(\ell)} \in V_r, v_s^{'(\ell)} \in V_s, \forall \ell \in [B], r \in R, s \in S$
   \FOR{$r=0$ {\bfseries to} $\Gamma$}
   \FOR{$\ell=1$ {\bfseries to} $B$}
   \STATE $\forall \ell \in [C], i \in R \cup S:$
    \STATE \quad $v_i'^{(\ell)} \leftarrow v_i'^{(\ell)} + \gamma \nabla_{v_i'}[u_i^w(v_i^{(\ell)};(v_i', v_{-i}^{(\ell)}))]\big|_{v_i' = v_i'^{(\ell)}}$
            
   \ENDFOR
   \ENDFOR
   \STATE Compute Lagrangian gradient and update $w^t$
   \STATE $w^{t+1} \leftarrow w^{t} - \eta \nabla_{w} \mathcal{L}_{\rho_{t}}(w^{t}, \mu^{t})$
   \IF{$t$ is a multiple of $H$}
   \STATE $\mu_{e}^{t+1} \leftarrow \mu_{e}^{t} + \rho_t \widehat{rgt}^{e}(w^{t+1}), \forall e \in E$
   \ELSE
   \STATE $\boldsymbol{\lambda}^{t+1} \leftarrow \boldsymbol{\lambda}^t$
   \ENDIF
   \ENDFOR
\end{algorithmic}
\end{algorithm}
\section{Proof of Theorem \ref{thm:optimal_joint_auction}}\label{proof:optimal}

The joint auction with single slot remains a single-parameter auction, with the primary distinction from traditional auctions lying in the allocation and payment rules. As a result, the necessary and sufficient conditions for a feasible auction, as established in Myerson’s work \cite{myerson1981optimal}, still hold in this setting.Before presenting the proof, we first introduce an assumption: suppose \( a_i \) and \( b_i \) are the lower and upper bounds of the domain of the probability density function \( f_i \) for bidder \( i \). That is, \( v_i \in [a_i, b_i] \).
Given a joint auction mechanism, we define \( Q_i(v_i) \) as:
\[
Q_i(v_i) = \mathbb{E}_{\mathbf{v}_{-i} \sim F_{-i}} [x_i(v_i)] ,\quad \forall i \in R \cup S,
\]
where represents the expected probability that bidder \( i \) wins, given their value estimate \( v_i \).
\begin{lemma}[\cite{myerson1981optimal}]\label{IC}
    In the single-slot setting, a joint auction mechanism \( (x, p) \) is feasible if and only if the following conditions hold:
    \begin{itemize}
    \item[(a)] $\text{If }v_i^* \le v_i,\text{  then } Q_i(v_i^*) \le Q_i(v_i)  ,\quad \forall i \in R\cup S,v_i,v_i^*\in V_i $.
    \item[(b)] $U_i(v_i,v_i)=U_i(a_i,a_i)+\int_{a_i}^{v_i}Q(z)dz,\quad \forall i \in R\cup S $.
    \item[(c)] $U_i(v_i,v_i)\ge 0,\quad \forall i \in R\cup S$.
    \item[(d)] $\sum_{e \in  E}x^e(\mathbf{v})\le 1$,\quad $x^e(\mathbf{v})\ge 0$
\end{itemize}
\end{lemma}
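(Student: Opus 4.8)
The plan is to recognize that, from the vantage point of any single bidder $i$, the single-slot joint auction is an ordinary one-dimensional mechanism design problem, so the statement is Myerson's characterization and its proof is the classical one; the bundle structure enters only through the bookkeeping identities $x_i=\sum_{e\in E_i}x^e$ and $p_i=\sum_{e\in E_i}p_i^e$, which I never need to unpack. First I would fix notation: write $\bar U_i(v_i):=U_i(v_i,v_i)$ for the truthful interim utility, $g_i(v_i):=\mathbb{E}_{v_{-i}}[p_i(v_i,v_{-i})]$ for the interim payment, and keep $Q_i(v_i)=\mathbb{E}_{v_{-i}}[x_i(v_i,v_{-i})]$ as already defined, noting that $\sum_{e\in E}x^e(v)\le\mathbf{1}_m$ with $m=1$ forces $x_i\in\{0,1\}$, so $Q_i\in[0,1]$. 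Then $U_i(v_i,v_i')=\lambda_1 v_i Q_i(v_i')-g_i(v_i')$, where $\lambda_1$ is the single slot's CTR (the statement is written as if $\lambda_1=1$; the general case merely rescales constants). Conditions (c) and (d) then require no work: (c) is verbatim the IR condition and (d) is part of the definition of a joint auction mechanism, so the real content is the equivalence of IC with (a) and (b).

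For the necessity direction I would assume $(x,p)$ is IC and write the IC inequality for bidder $i$ between a true value and a misreport in both directions; subtracting the two gives the standard sandwich $\lambda_1(v_i-v_i^*)Q_i(v_i)\ \ge\ \bar U_i(v_i)-\bar U_i(v_i^*)\ \ge\ \lambda_1(v_i-v_i^*)Q_i(v_i^*)$. Taking $v_i>v_i^*$ and dividing by $\lambda_1(v_i-v_i^*)\ge 0$ yields $Q_i(v_i)\ge Q_i(v_i^*)$, which is (a). The same sandwich shows $\bar U_i$ is Lipschitz on the compact support $[a_i,b_i]$, hence absolutely continuous, and sending $v_i^*\to v_i$ identifies $\bar U_i'(v_i)=\lambda_1 Q_i(v_i)$ wherever the derivative exists; the fundamental theorem of calculus for absolutely continuous functions then integrates this to $\bar U_i(v_i)=\bar U_i(a_i)+\lambda_1\int_{a_i}^{v_i}Q_i(z)\,dz$, which is (b).

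For sufficiency I would assume (a)--(d), fix $v_i,v_i^*$, and show $\bar U_i(v_i)\ge U_i(v_i,v_i^*)$. Since $U_i(v_i,v_i^*)=\bar U_i(v_i^*)+\lambda_1(v_i-v_i^*)Q_i(v_i^*)$, it suffices to check $\bar U_i(v_i)-\bar U_i(v_i^*)\ge\lambda_1(v_i-v_i^*)Q_i(v_i^*)$; by (b) the left side equals $\lambda_1\int_{v_i^*}^{v_i}Q_i(z)\,dz$, and the monotonicity (a) makes this at least $\lambda_1(v_i-v_i^*)Q_i(v_i^*)$ in both cases ($v_i>v_i^*$: $Q_i(z)\ge Q_i(v_i^*)$ on the interval; $v_i<v_i^*$: $Q_i(z)\le Q_i(v_i^*)$, which restores the inequality once the sign is accounted for). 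Hence IC holds. Finally IR follows from (b): $\bar U_i(v_i)=\bar U_i(a_i)+\lambda_1\int_{a_i}^{v_i}Q_i(z)\,dz\ge\bar U_i(a_i)\ge 0$, using $Q_i\ge 0$ and (c) at $a_i$.

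I do not expect a genuine obstacle, since this is Myerson's lemma transplanted to the joint setting; the care is entirely in the bookkeeping: verifying that $x_i$ really is $\{0,1\}$-valued in the single-slot joint model (so that ``monotonicity of $Q_i$'' is the right notion), handling the measure-theoretic step that passes from the a.e.\ identity $\bar U_i'=\lambda_1 Q_i$ back to the exact integral form (absolute continuity of $\bar U_i$, integrability of $Q_i$), tracking the CTR scalar $\lambda_1$, and getting the sign of $(v_i-v_i^*)$ right in the two cases of the sufficiency argument. I would also add a one-line remark explaining why the bundle structure is harmless: every step works with bidder $i$'s interim quantities $Q_i$ and $g_i$ only, and never needs to split $p^e$ into $p_r^e+p_s^e$.
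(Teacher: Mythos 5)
Your proof is correct, but note that the paper does not actually prove this lemma: it cites Myerson (1981) and justifies the transfer with a single observation, namely that the single-slot joint auction is still a single-parameter environment for each bidder $i$ once one works with the aggregated quantities $x_i=\sum_{e\in E_i}x^e$ and $p_i=\sum_{e\in E_i}p_i^e$. Your argument is exactly the classical one that the citation points to (the two-sided IC sandwich giving monotonicity of $Q_i$, Lipschitz continuity and the envelope identity $\bar U_i'=\lambda_1 Q_i$ a.e.\ giving the integral formula, and the reverse direction from monotonicity plus the payment identity), so it fills in what the paper leaves implicit rather than taking a different route. Two small points of bookkeeping: condition (d) as stated permits $x^e\in[0,1]$ with $\sum_e x^e\le 1$, so $x_i$ is $[0,1]$-valued rather than forced into $\{0,1\}$ — this changes nothing in your argument, which only uses $Q_i\in[0,1]$; and your observation about the CTR scalar is apt, since the paper's condition (b) is written without the $\lambda_1$ factor (and with $Q$ missing its subscript), i.e.\ normalized to $\lambda_1=1$, which is consistent with how the paper uses the lemma later in Equation (7) of its appendix.
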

\( \mathcal{M} = (x, p) \) is an optimal joint auction mechanism if and only if it maximizes \( U_0 \) while ensuring the feasibility of the auction.  We provide a simpler condition to characterize optimality.

\begin{lemma}\label{or1}
    Suppose that the function \( x: V \to 2^n \) maximizes the following objective:
\begin{equation}
    \begin{aligned}
        \max_{ \mathcal{M} } \quad & \int_{V} \sum_{e=(r,s)\in E} \left( v_r + v_s - \frac{1 - F_r(v_r)}{f_r(v_r)} - \frac{1 - F_s(v_s)}{f_s(v_s)} - v_0 \right) x^e(v) f(v) dv \\
        \text{s.t.} \quad & p_i(v) = v_i x_i(\mathbf{v}) - \int_{a_i}^{v_i} x_i(z, v_{-i}) dz, \quad \forall i \in R\cup S\\
        & \text{If }v_i^* \le v_i,\text{  then } Q_i(v_i^*) \le Q_i(v_i)  ,\quad \forall i \in R\cup S,v_i,v_i^*\in V_i,\\
        &\sum_{e \in  E}x^e(\mathbf{v})\le 1,\quad x^e(\mathbf{v})\ge 0
    \end{aligned}
\end{equation}

Then, the pair \(  \mathcal{M} = (x, p) \) represents an optimal joint auction.
\end{lemma}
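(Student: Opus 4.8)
\textbf{Proof proposal for Lemma \ref{or1}.}

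The plan is to carry out the standard Myerson revenue-equivalence argument, but bookkeeping the allocation at the level of bundles $e=(r,s)\in E$ rather than individual bidders. First I would start from the definition of expected revenue $U_0$ and rewrite it using the structure of the joint mechanism. Since each slot is either allocated to some bundle $e$ (contributing $p^e(\mathbf v) = p_r^e(\mathbf v)+p_s^e(\mathbf v)$ to revenue) or left unallocated (contributing $v_0$ times its CTR), and since payments decompose as $p_i(v)=\sum_{e\in E_i}p_i^e(v)$, I can write $U_0 = \mathbb{E}_{\mathbf v}\big[\sum_{i\in R\cup S} p_i(\mathbf v)\big] + \mathbb{E}_{\mathbf v}\big[v_0(1-\sum_{e\in E}x^e(\mathbf v))\big]$ (specializing to the single-slot case where $\boldsymbol\lambda=(1)$, so the CTR vector drops out). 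The key move is then to invoke Lemma \ref{IC}: for a feasible mechanism, condition (b) gives $U_i(v_i,v_i)=U_i(a_i,a_i)+\int_{a_i}^{v_i}Q_i(z)\,dz$, which upon unpacking $U_i(v_i,v_i)=\mathbb{E}_{v_{-i}}[v_i x_i(v) - p_i(v)]$ and integrating by parts against the density $f_i$ yields the familiar identity
\[
\mathbb{E}_{v_i}\big[\mathbb{E}_{v_{-i}}[p_i(v)]\big] = \mathbb{E}_{v}\Big[\Big(v_i - \tfrac{1-F_i(v_i)}{f_i(v_i)}\Big)x_i(v)\Big] - U_i(a_i,a_i).
\]

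Next I would sum this identity over all $i\in R\cup S$ and substitute into the expression for $U_0$. The crucial regrouping step is that $\sum_{i\in R\cup S} c_i(v_i) x_i(v) = \sum_{i\in R\cup S}\sum_{e\in E_i} c_i(v_i)x^e(v) = \sum_{e=(r,s)\in E}(c_r(v_r)+c_s(v_s))x^e(v) = \sum_{e\in E} c^e(v)\,x^e(v)$, using that each bundle $e=(r,s)$ appears exactly once in $E_r$ and once in $E_s$ and that $x_r^e=x_s^e=x^e$. Combining with the $v_0$ term gives
\[
U_0 = \mathbb{E}_{v}\Big[\sum_{e=(r,s)\in E}\big(c_r(v_r)+c_s(v_s)-v_0\big)x^e(v)\Big] - \sum_{i\in R\cup S}U_i(a_i,a_i),
\]
which, written as an integral against $f(v)$, is exactly the objective in the statement up to the additive constants $\sum_i U_i(a_i,a_i)$. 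To finish, I would argue that maximizing $U_0$ over feasible mechanisms is equivalent to maximizing that integral: by the IR condition (Lemma \ref{IC}(c)) together with monotonicity, the optimum sets $U_i(a_i,a_i)=0$ for every $i$ (lowering it cannot help, and it is constrained to be nonnegative), so these constants vanish at the optimum and contribute nothing to the maximization. Finally, the payment formula $p_i(v)=v_ix_i(v)-\int_{a_i}^{v_i}x_i(z,v_{-i})dz$ in the constraint set is precisely the rewriting of Lemma \ref{IC}(b) with $U_i(a_i,a_i)=0$, and the monotonicity of $Q_i$ plus $\sum_e x^e\le 1$, $x^e\ge0$ are exactly conditions (a) and (d); hence the feasible set of the optimization problem coincides (modulo the already-justified normalization $U_i(a_i,a_i)=0$) with the set of feasible joint mechanisms, and any $x$ maximizing the displayed objective, paired with the induced $p$, is an optimal joint auction.

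The main obstacle I anticipate is being careful about the interchange of integration and the integration-by-parts step on a general (possibly unbounded or merely measurable) allocation rule, and about the direction of the equivalence — specifically verifying that the constraint set in the lemma is neither strictly larger nor strictly smaller than the true feasible set. In particular one must check that every $x$ arising from a feasible mechanism satisfies the listed constraints (this is just Lemma \ref{IC}), and conversely that any $x$ satisfying the constraints, together with the prescribed $p$, actually yields a feasible mechanism — i.e. that the monotonicity of $Q_i$ and the explicit payment formula are genuinely sufficient for IC and IR, which is again Myerson's lemma applied coordinate-wise. I do not expect the revenue-to-virtual-value computation itself to be difficult, only the routine measure-theoretic hygiene and the explicit confirmation that dropping $U_i(a_i,a_i)$ is without loss; these are the places where I would be most careful.
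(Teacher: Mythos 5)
Your proposal is correct and follows essentially the same route as the paper's proof: decompose $U_0$, apply Lemma \ref{IC}(b) with the Fubini/integration-by-parts step to convert expected payments into virtual values, regroup the per-bidder terms $\sum_i c_i(v_i)x_i(v)$ into per-bundle terms $\sum_e c^e(v)x^e(v)$, and then set the boundary utilities $U_i(a_i,a_i)$ to zero (which IR permits and revenue maximization forces), recovering the stated payment formula. Your version even writes the boundary-utility terms with the correct sign ($-\sum_i U_i(a_i,a_i)$), where the paper's Equation (8) has an apparent sign slip, and your closing check that the constraint set matches the feasible set is exactly the content of Lemma \ref{IC} as the paper uses it.
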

\begin{proof}

To prove that the pair \( \mathcal{M} = (x, p) \) represents an optimal joint auction, we analyze the expected revenue function \( U_0 \) and show that it maximizes the given objective function while satisfying feasibility and incentive compatibility constraints.
The expected revenue \( U_0 \) is given by:
\begin{equation}\label{eq6}
    \begin{aligned}
        U_0 &= \mathbb{E}_{\mathbf{v} \sim F} \left[ v_0 \left( \mathbf{1} - \sum_{e \in E} x^e(\mathbf{v}) \right) + \sum_{e \in E} p^e(\mathbf{v}) \right],\\
        & = \int_{V} v_0f(v)dv +\sum_{e \in E}(\int_T[p_r^e(v)-v_rx_r^e(v)]f(v)dv +\int_V[p_s^e(v)-v_sx_s^e(v)]f(v)dv)\\
        &+\sum_{e \in E}\int_{V}x^e(v)(v_s+v_r-v_0)f(v)dv\\
    \end{aligned}
\end{equation}
Using the conclusion of Lemma \ref{IC}, we can simplify the following equation.
\begin{equation}\label{eq7}
    \begin{aligned}
        \sum_{e \in E}\int_V[p_r^e(v)-v_rx_r^e(v)]f(v)dv& =\sum_{r \in R}\int_V[p_r(v)-v_rx_r(v)]f(v)dv\\
& = -\sum_{r \in R}\int_{a_r}^{b_r}[U_r(v_r,v_r)]f_r(v_r)dv_r\\
& = -\sum_{r \in R}\int_{a_r}^{b_r}[U_r(a_r,a_r)+\int_{a_r}^{v_r}Q_r(z)dz]f_r(v_r)dv_r\\
& = -\sum_{r \in R}[U_r(a_r,a_r)+\int_{a_r}^{b_r}\int_{z}^{b_r}Q_r(z)f_r(v_r)dv_rdz]\\
& = -\sum_{r \in R}[U_r(a_r,a_r)+\int_{a_r}^{b_r}(1-F_r(z))Q_r(z)dz]\\
& = -\sum_{r \in R}[U_r(a_r,a_r)+\int_{a_r}^{b_r}(1-F_r(v_r))x_r(v_r)f_{-r}(v_{-r})dv]\\
& = -\sum_{e \in E}[U_r^e(a_r,a_r)+\int_{a_r}^{b_r}(1-F_r(v_r))x_r^e(v_r)f_{-r}(v_{-r})dv]\\
    \end{aligned}
\end{equation}
Substituting Equation \ref{eq7} into Equation \ref{eq6} gives us:
\begin{equation}\label{eq8}
    \begin{aligned}
        U_0 = &\int_{V} v_0f(v)dv+\sum_{e=(r,s) \in E}\int_{V}x^e(v)(v_s+v_r-\frac{1-F_r(v_r)}{f_r(v_r)}-\frac{1-F_s(v_s)}{f_s(v_s)}-v_0)f(v)dv\\
        +&\sum_{r \in R}U_r(a_r,a_r)+\sum_{s \in S}U_s(a_s,a_s)\\
    \end{aligned}
\end{equation}
The first term in Equation \ref{eq8} is a constant, while the third and fourth terms are always non-negative due to the IR property of the auction. When these terms are equal to zero, the objective function reaches its maximum value without affecting the value of the first and second terms. Thus, by setting the third and fourth terms to zero, we obtain:
\begin{equation}\label{eq9}
    \begin{aligned}
        p_r(v)=v_rx_r(v)-\int_{a_r}^{v_{r}}x_r(z,v_{-r})dz\\
    p_s(v) =v_sx_s(v)-\int_{a_s}^{v_s}x_s(z,v_{-s})dz\\
    \end{aligned}
\end{equation}

\end{proof}

Next, we proceed with the proof of Theorem \ref{thm:optimal_joint_auction}:
\begin{proof}[Proof of Theorem \ref{thm:optimal_joint_auction}]
In this paper, we consider distributions that are regular. Since distributions in the regular class ensure the monotonicity required by Lemma \ref{or1}, they guarantee the IC property of the joint auction. As a result, we can simplify the objective function as follows:
$$
\sum_{e=(r,s) \in E}(c^e(v_r,v_s)-v_0)x^e(v)
$$
This implicitly indicates that the slot will be allocated to the bundle with the highest virtual value. For a single-slot joint auction, each bidder's bidding strategy depends solely on the neighbor node with the highest virtual value. This is because any bundle formed with other neighbors has no chance of winning the auction. Therefore, for each bidder, their critical value is given by:
\begin{itemize}
        \item For \( i \in R \), the critical value \(\hat{v}_r(v_{-r})\) is:
        \[
        \hat{v}_r(v_{-r}) = \inf\left\{ b_r \mid c^{e_r^M}(b_r, v_{s_r^M}) \geq v_0, \text{ and } c^{e_r^M}(b_r, v_{s_r^M}) \geq c^{\hat{e}}(v_{\hat{r}}, v_{\hat{s}}), \forall \hat{e}=(\hat{r},\hat{s}) \in E_{-r} \right\}
        \]
            
        \item For \( i \in S \), the critical value \(\hat{v}_s(v_{-s})\) is similarly:
        \[
        \hat{v}_s(v_{-s}) = \inf\{b_s|c^{e_s^M}(v_{r_s^M},b_{s})\ge v_0,\text{ and } c^{e_s^M}(v_{r_s^M},b_{s})\ge c^{\hat{e}}(v_{\hat{r}}, v_{\hat{s}}),\forall \hat{e}=(\hat{r},\hat{s}) \in E_{-s}\}
        \]
    \end{itemize}
We take a node \( r \in R \) as an example. The allocation of all bundles connected to \( r \) is given by:

\[
x^{e_r^M}(b_r, v_{-r}) = 
\begin{cases} 
1 & \text{if } b_r \geq \hat{v}_r(v_{-r}) \\
0 & \text{if } b_r < \hat{v}_r(v_{-r})
\end{cases}
\]
\[
x^{e^*}(b_r, v_{-r})= 0, \quad \forall e^* \in E_r \setminus \{e_r^M\}
\]

Thus, we can derive the allocation result for \( r \) as follows:
\[
x_r(b_r,v_{-r}) = x^{e_r^M}(b_r,v_{-r})+\sum_{e^*\in E_r/\{e_r^M\}}x^{e^*}(b_r,v_{-r})=\begin{cases} 
1 & \text{if } b_r \geq \hat{v}_r(v_{-r}) \\
0 & \text{if } b_r < \hat{v}_r(v_{-r})\\
\end{cases}\\
\]
We can derive the final payment result by solving the payment formula based on the Equation \ref{eq9}. We integrate the allocation rule to determine the payment function. Thus, the final payment for bidder \( r \) is given by:

\[
p_r(b_r,v_{-r}) = \begin{cases}z_r(v_{-r}) & \text{if } b_r \geq \hat{v}_r(v_{-r}) \\0 & \text{if } b_r < \hat{v}_r(v_{-r})\\\end{cases}\
\]

For a node \( s \in S \), the conclusion follows similarly. The allocation and payment rules apply symmetrically due to the structure of the joint auction mechanism, where each participant's strategy depends on their highest virtual value neighbor.  

Thus, the proof is complete.

\end{proof}
\section{Proof of Lemma \ref{lemma}}\label{proof:lemma}
\begin{proof}
    \begin{small}
    \begin{equation}
    \begin{aligned}
        \sum_{i \in R\cup S}rgt_i&=\sum_{i \in R\cup S} \mathbb{E}_{v \sim F} \left[ \max_{v'_i \in V_i} u_i(v_i; (v'_i, v_{-i}); w) - u_i(v_i; (v_i, v_{-i}); w) \right]\\
        &= \sum_{r \in R} \mathbb{E}_{v \sim F} \left[ \max_{v'_r \in V_r} u_r(v_r; (v'_r, v_{-r}); w) - u_r(v_r; (v_r, v_{-r}); w) \right]\\
        &+\sum_{s \in S} \mathbb{E}_{v \sim F} \left[ \max_{v'_s \in V_s} u_s(v_s; (v'_s, v_{-s}); w) - u_s(v_s; (v_s, v_{-s}); w) \right]\\
        &= \sum_{r \in R} \mathbb{E}_{v \sim F} \left[ \max_{v'_r \in V_r} \sum_{e \in E_r} u_r^e(v_r; (v'_r, v_{-r}); w) - \sum_{e \in E_r}u_r^e(v_r; (v_r, v_{-r}); w) \right]\\
        &+\sum_{s \in S} \mathbb{E}_{v \sim F} \left[ \max_{v'_s \in V_s} \sum_{e \in E_s} u_s^e(v_s; (v'_s, v_{-s}); w) - \sum_{e \in E_s}u_s^e(v_s; (v_s, v_{-s}); w) \right]\\
        &\le \sum_{e \in E} \mathbb{E}_{v \sim F} \left[ \max_{v'_r \in V_r} u_r^e(v_r; (v'_r, v_{-r}); w) - u_r^e(v_r; (v_r, v_{-r}); w) \right]\\
        &+\sum_{e \in E} \mathbb{E}_{v \sim F} \left[ \max_{v'_s \in V_s} u_s^e(v_s; (v'_s, v_{-s}); w) - u_s^e(v_s; (v_s, v_{-s}); w) \right]\\
        &= \sum_{e=(r,s)\in E} rgt^e(w) 
    \end{aligned}
\end{equation}
\end{small}
\end{proof}

\section{Additional Experiments}
\subsection{Experimental Results Under Single-Slot Setting N}\label{SEXP:N}
\begin{table*}[h]
    \centering
    \begin{tabular}{lccccc}
        \hline
        \textbf{Alg.} & \multicolumn{4}{c}{\textbf{Setting}} \\
        \cmidrule(lr){2-5}
        & $N_2$ & $N_3$ & $N_4$ & $N_5$ \\
        \hline
        \textbf{Ours} & & & & &\\
        \quad BundleNet & \textbf{0.7750} & \textbf{0.8692} & \textbf{0.9134} &  \textbf{0.9561} \\
        \quad IC Violation & $<0.001$ & $<0.001$ & $<0.001$ & $<0.001$ \\
        \hline
        IC Baselines & & & & &\\
        \quad RVCG & 0.5492 & 0.8114 & 0.8956 & 0.9444\\
        \quad Optimal &  0.7789 & 0.8656 & 0.9188 & 0.9582 \\

        \hline
        Baselines with IC Violation & & & & &\\
        \quad JRegNet & 0.8183  & 0.9091 & 0.8747 & 0.9117\\
        \quad IC Violation & $<0.001$ & $<0.001$ & $<0.001$ & $<0.001$ \\
        \bottomrule
    \end{tabular}
    \caption{The experimental results compare BundleNet, JRegNet, Revised VCG, and the Optimal Mechanism as the number of bundles increases under different settings in the single-slot scenario with CTR \( \lambda = (1) \).  The notations \( N_2, N_3, N_4, N_5 \) represent cases where the number of bundles is 2, 3, 4 and 5, respectively, under the normal distribution \( N(0.5,0.1) \). In this table, we use bold to indicate the method among BundleNet, JRegNet, and RVCG that is closest to the optimal mechanism, rather than the one with the highest revenue.} 
    \label{table:SN}
\end{table*}
\newpage

\subsection{Experimental Results Under Multi-Slot Setting}\label{EXP:MS}
\begin{table*}[h]
    
    \centering
    \begin{tabular}{lccccccccc}
        \hline
        \textbf{Alg.} & \multicolumn{8}{c}{\textbf{Setting}} \\
        \cmidrule(lr){2-9}
        & $LN_{5\times 5}$ & $LN_{6\times 5}$ & $LN_{7\times 5}$ & $LN_{8\times 5}$ & $LN_{9\times 5}$ & $LN_{10\times 5}$ & $LN_{11\times 5}$ & $LN_{12\times 5}$\\
        \hline
        \textbf{Ours} & & & & &\\
        \quad BundleNet & \textbf{2.6560} & \textbf{3.0093} & \textbf{3.4346} &  \textbf{3.7031} & \textbf{3.9952}&   \textbf{4.2366} &  \textbf{4.4618} &  \textbf{4.6038}\\
        \quad IC Violation & $<0.001$ & $<0.001$ & $<0.001$ & $<0.001$ & $<0.001$ & $<0.001$ & $<0.001$ & $<0.001$ \\ 
        \hline
        IC Baseline & & & & &\\
        \quad RVCG &  1.4365 & 2.1764 & 2.7130 & 3.1579  & 3.5104 & 3.8332 & 4.1019 & 4.3449\\
        \hline
        Baselines with IC Violation & & & & &\\
        \quad JRegNet & 2.6020 & 2.9237 & 3.1845 &  3.4139 & 3.441 & 3.2535 & 3.2747 & 3.3848\\
        \quad IC Violation & $<0.001$ & $<0.001$ & $<0.001$ & $<0.001$ & $<0.001$ & $<0.001$ & $<0.001$ & $<0.001$\\     
        \bottomrule
    \end{tabular}
    \caption{The experimental results of BundleNet, JRegNet, VCG as the number of bundles increases under different settings in the multi-slots scenario. Similar to those of \ref{tab:results}, the notation $LN_{5\times 5},\cdots, LN_{12\times5}$ represent the settings where the number of bundles varies from 5 to 12 , respectively, under the truncated lognormal distribution LN (0.1, 1.44) over the interval (0, 1), while letting the CTRs of these 5 slots as (1, 0.8, 0.6, 0.4, 0.2). In most of the scenarios, BundleNet report the better performance of JRegNet. }
    \label{tab:multi results-LN}
\end{table*}
\begin{table*}[h]
    
    \centering
    \begin{tabular}{lccccccccc}
        \hline
        \textbf{Alg.} & \multicolumn{8}{c}{\textbf{Setting}} \\
        \cmidrule(lr){2-9}
        & $N_{5\times 5}$ & $N_{6\times 5}$ & $N_{7\times 5}$ & $N_{8\times 5}$ & $N_{9\times 5}$ & $N_{10\times 5}$ & $N_{11\times 5}$ & $N_{12\times 5}$\\
        \hline
        \textbf{Ours} & & & & &\\
        \quad BundleNet & 2.1393 & \textbf{2.3690} & \textbf{2.4914} &  \textbf{2.6287} & \textbf{2.7020}&   \textbf{2.7916} &  \textbf{2.8428} &  \textbf{2.8841}\\
        \quad IC Violation & $<0.001$ & $<0.001$ & $<0.001$ & $<0.001$ & $<0.001$ & $<0.001$ & $<0.001$ & $<0.001$ \\ 
        \hline
        IC Baseline & & & & &\\
        \quad RVCG &  1.3110 & 2.0132 & 2.3565  & 2.5343  & 2.6395 & 2.7228 & 2.7840 & 2.8383\\
        \hline
        Baselines with IC Violation & & & & &\\
        \quad JRegNet & \textbf{2.2071} & 2.3537 & 2.4814 &  2.4740 & 2.5185 & 2.4711 & 2.4082 & 2.3272\\
        \quad IC Violation & $<0.001$ & $<0.001$ & $<0.001$ & $<0.001$ & $<0.001$ & $<0.001$ & $<0.001$ & $<0.001$\\     
        \bottomrule
    \end{tabular}
    \caption{The experimental results of BundleNet, JRegNet, VCG as the number of bundles increases under different settings in the multi-slots scenario. Similar to those of \ref{tab:results}, the notation $N_{5\times 5},\cdots, N_{12\times5}$ represent the settings where the number of bundles varies from 5 to 12, respectively, under the truncated normal distribution N (0.5, 0.1) over the interval (0, 1), while letting the CTRs of these 5 slots as (1, 0.8, 0.6, 0.4, 0.2). In most of the scenarios, BundleNet report the better performance of JRegNet. }
    \label{tab:multi results-N}
\end{table*}

\section{Comparison of Visualized Allocation Results Under the Setting \( U_2 \)}\label{visual}
To further validate our hypothesis, we visualize the allocation rules of BundleNet and JRegNet to examine their differences. Since the bipartite graph structures in joint advertisement scenarios can be highly complex, we select the \textbf{Setting \( U_2 \)}, which consists of only two types of heterogeneous bipartite graphs. Based on these structures, we design two different sets of experiments. 

In the first experiment, we consider a bipartite graph where Set \( R \) contains two nodes (\( r_1, r_2 \)), Set \( S \) contains a single node (\( s_1 \)), and there are two bundles, \( e_1 = (r_1, s_1) \) and \( e_2 = (r_2, s_1) \). To observe the impact of bidding on the allocation outcome of \( e_1 \), we fix the bid for \( s_1 \) at 0, 0.25, 0.5, and 0.75, while varying the bids of the two nodes  \( r_1, r_2 \). 

In the second experiment, we extend the complexity by introducing a bipartite graph where set \( R \) contains two nodes (\( r_1, r_2 \)), Set \( S \) contains two nodes (\( s_1, s_2 \)), and there are two bundles, \( e_1 = (r_1, s_1) \) and \( e_2 = (r_2, s_2) \). To analyze how the bids in \( e_1 \) affect its allocation outcome, we fix the bid for \( e_2 \) at (0,0), (0.25,0.25), (0.5,0.5), and (0.75,0.75) and observe the impact of the bids from \( r_1 \) and \( s_1 \) in \( e_1 \).

As shown in Figure \ref{fig:visual}, we present the allocation results of BundleNet and JRegNet in two experimental settings. We observe that BundleNet exhibits clearer boundary awareness than JRegNet, accurately identifying scenarios where Bundle \( e_1 \) should not be allocated.

\begin{figure*}[ht]
    \centering
    
    \begin{subfigure}[b]{0.3\textwidth}
        \includegraphics[width=\textwidth]{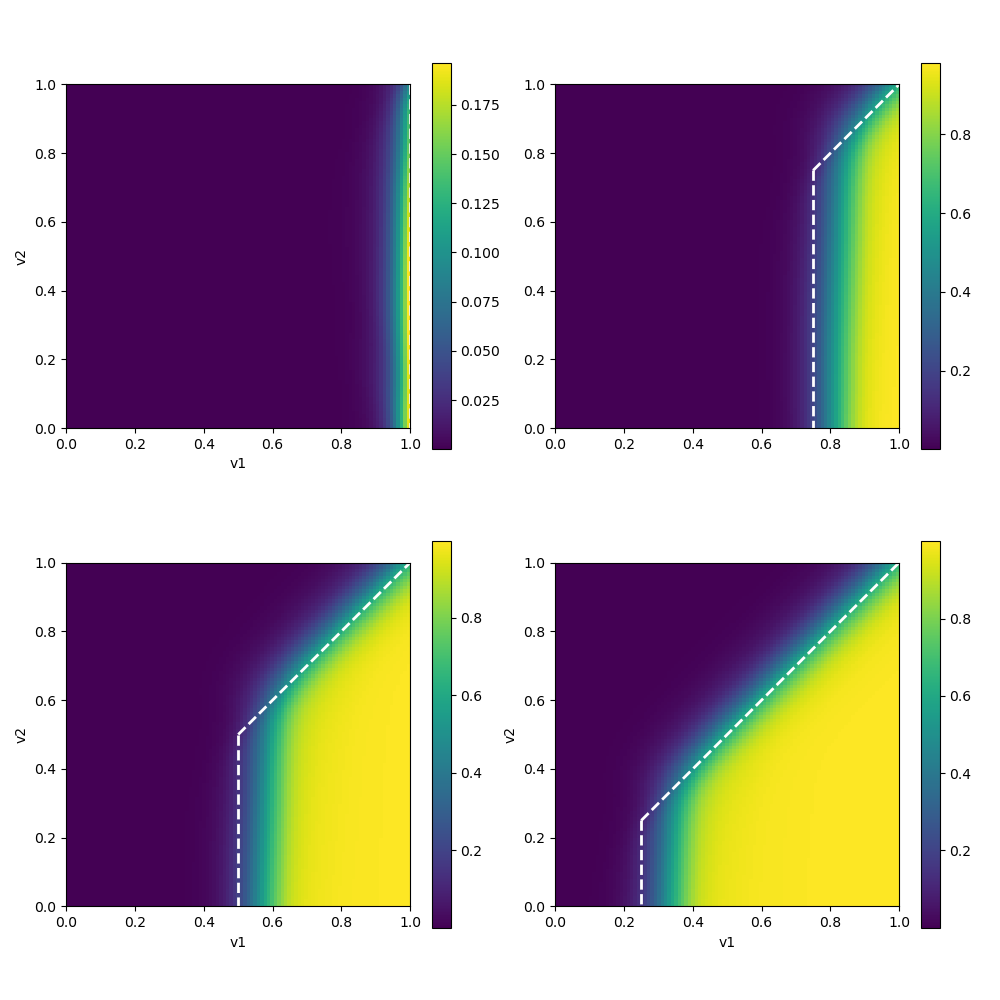}
        \caption{The allocation results of BundleNet\\ in the first experiment.}
        \label{fig:sub1}
    \end{subfigure}%
    \begin{subfigure}[b]{0.3\textwidth}
        \includegraphics[width=\textwidth]{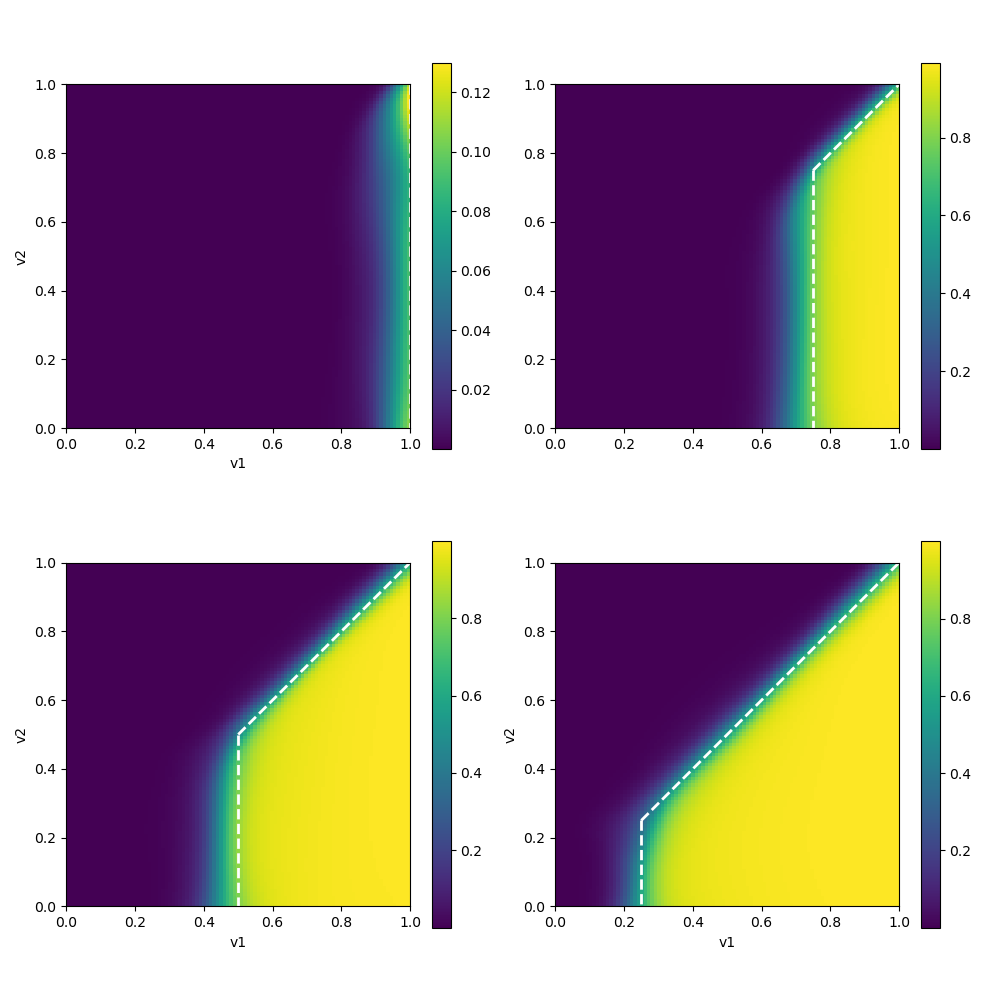}
        \caption{The allocation results of JRegNet\\ in the first experiment.}
        \label{fig:sub2}
    \end{subfigure}
    
    \begin{subfigure}[b]{0.3\textwidth}
        \includegraphics[width=\textwidth]{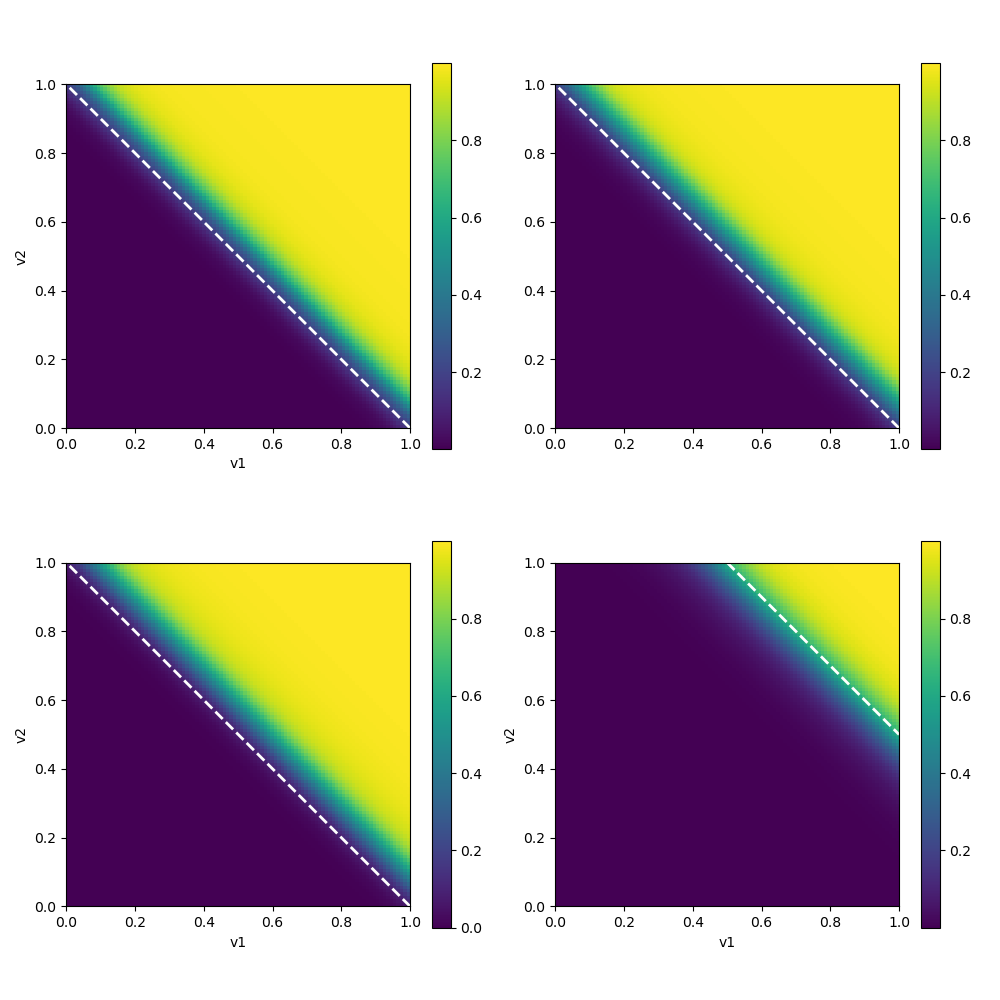}
        \caption{The allocation results of BundleNet\\ in the second experiment.}
        \label{fig:sub3}
    \end{subfigure}%
    \begin{subfigure}[b]{0.3\textwidth}
        \includegraphics[width=\textwidth]{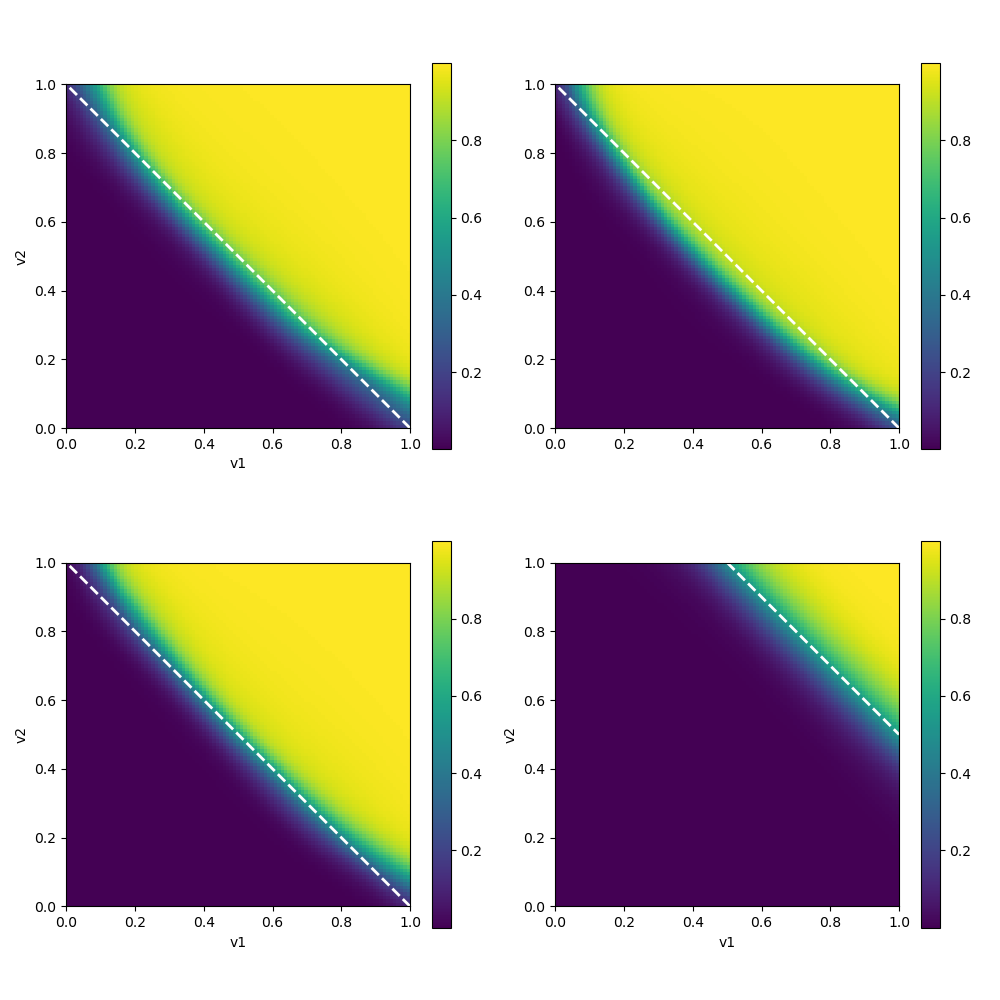}
        \caption{The allocation results of JRegNet\\ in the second experiment.}
        \label{fig:sub4}
    \end{subfigure}

    \caption{The figure presents the allocation rules learned by BundleNet and JRegNet under the \textbf{Setting \( U_2 \)}. Subfigures (a) and (c) show the allocation results of BundleNet in two experiments, while (b) and (d) display the results of JRegNet in the same experiments. The solid regions in all subfigures depict the probability of the single-slot being allocated to bundle \( e_1 \), with the white dashed line representing the boundary of the allocation results derived from Myerson-like mechanism.}
    \label{fig:visual}
\end{figure*}
\begin{figure*}[ht]
  \centering
  \includegraphics[width=0.9\textwidth]{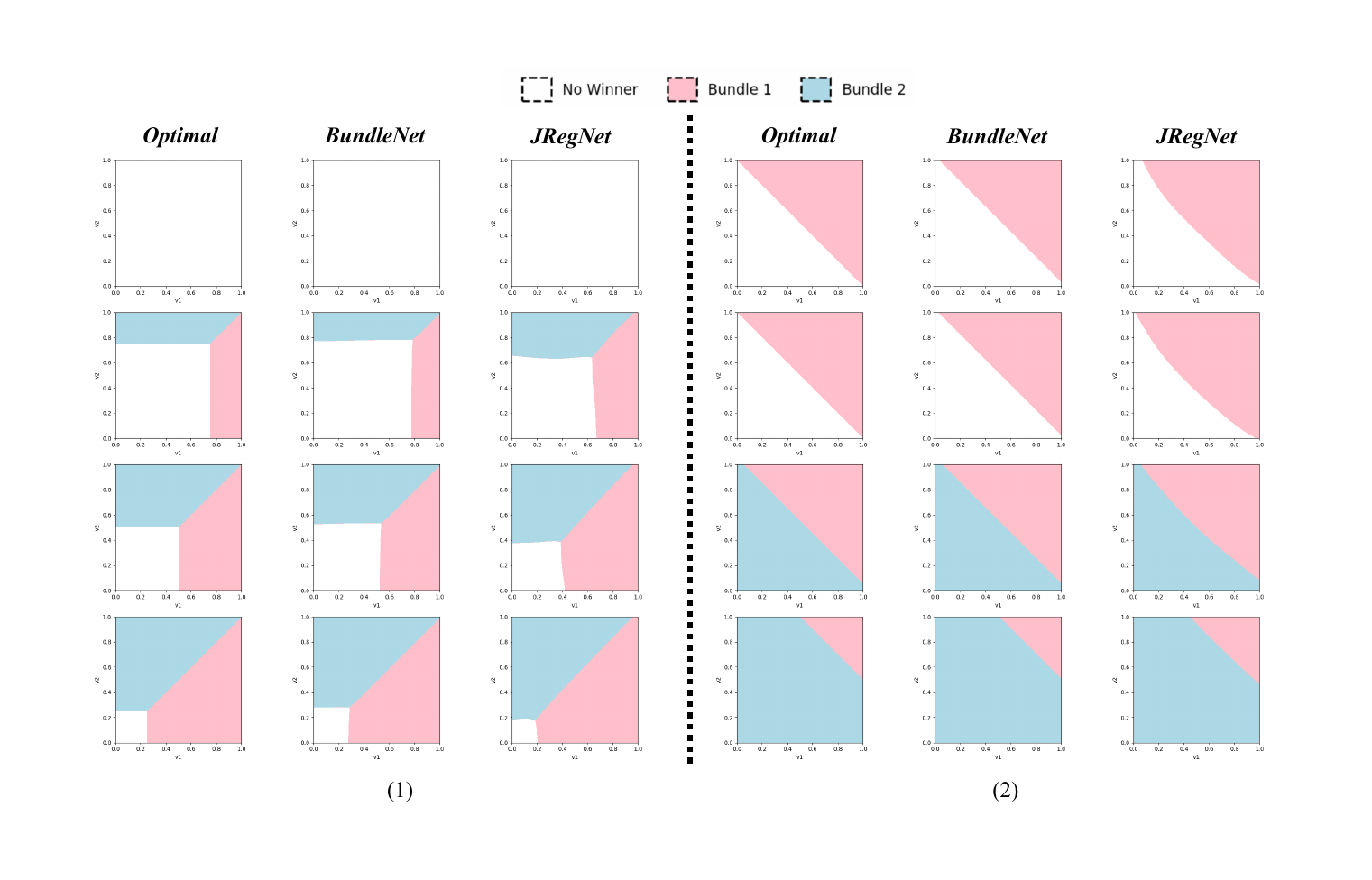}
  \caption{The figure presents the allocation rules learned by BundleNet and JRegNet under the \textbf{Setting \( U_2 \)}. Subfigure (1) show the allocation results of optimal mechanism, BundleNet and JRegNet in the first experiment, while subfigure (2) display the results of three mechanism in the second experiment. }
  \label{vis2}
\end{figure*}
To better demonstrate BundleNet's capability in learning the optimal mechanism, we compare its deterministic allocation results with those RegretNet-like methods and the theoretical optimum in Figure \ref{vis2}. Unlike stochastic allocation, deterministic allocation provide clearer visual differentiation between these mechanisms.
This visualization confirms BundleNet's superior approximation of the optimal mechanism, while JRegNet exhibits inconsistent performance. The deterministic view eliminates random noise, making the comparative advantages more apparent.



\end{document}